\newcommand{\argmax}{\mathop{\rm{arg~max}}\limits}
\newcommand{\argmin}{\mathop{\rm{arg~min}}\limits}
\newtheorem{proposition}{Proposition}
\newtheorem{definition}{Definition}[section]
\begin{document}

\title{Effect of Objective Function on Data-Driven Greedy Sparse Sensor Optimization}
\author{
\uppercase{Kumi Nakai},
\uppercase{Keigo Yamada},
\uppercase{Takayuki Nagata},
\uppercase{Yuji Saito}, and
\uppercase{Taku Nonomura}
\thanks{
K. Nakai is with Tohoku University, Sendai, 980-8579, Japan (e-mail: kumi.nakai@tohoku.ac.jp),
K. Yamada is with Tohoku University, Sendai, 980-8579, Japan,
T. Nagata is with Tohoku University, Sendai, 980-8579, Japan,
Y. Saito is with Tohoku University, Sendai, 980-8579, Japan,
T. Nonomura is with Tohoku University, Sendai, 980-8579, Japan.
}
}


\markboth{Journal of \LaTeX\ Class Files, Vol. 14, No. 8, August 2015}
{Shell \MakeLowercase{\textit{et al.}}: Bare Demo of IEEEtran.cls for IEEE Journals}
\maketitle
\begin{abstract}
The problem of selecting an optimal set of sensors estimating a high-dimensional data is considered. Objective functions based on D-, A-, and E-optimality criteria of optimal design are adopted to greedy methods, that maximize the determinant, minimize the trace of the inverse, and maximize the minimum eigenvalue of the Fisher information matrix, respectively. First, the Fisher information matrix is derived depending on the numbers of latent state variables and sensors. Then, a unified formulation of the objective function based on A-optimality is introduced and proved to be submodular, which provides the lower bound on the performance of the greedy method. Next, greedy methods based on D-, A-, and E-optimality are applied to randomly generated systems and a practical dataset concerning the global climate; these correspond to an almost ideal and a practical case in terms of statistics, respectively. The D- and A-optimality-based greedy methods select better sensors. The E-optimality-based greedy method does not select better sensors in terms of the index of E-optimality in the oversample case, while the A-optimality-based greedy method unexpectedly does so in terms of the index of E-optimality. The poor performance of the E-optimality-based greedy method is due to the lack of submodularity in the E-optimality index and the better performance of the A-optimality-based greedy method is due to the relation between A- and E-optimality. Indices of D- and A-optimality seem to be important in the ideal case where the statistics for the system are well known, and therefore, the D- and A-optimality-based greedy methods are suitable for accurate reconstruction. On the other hand, the index of E-optimality seems to be critical in the more practical case where the statistics for the system are not well known, and therefore, the A-optimality-based greedy method performs best because of its superiority in terms of the index of E-optimality. 
\end{abstract}
\begin{IEEEkeywords}
Data-driven, sparse sensor optimization, greedy method, optimal experimental design.
\end{IEEEkeywords}
\maketitle
\section{Introduction}
\label{sec:introduction}
\IEEEPARstart{T}{he} development of an accurate and efficient model for estimation, prediction, and control of complex phenomena is an open challenge in various scientific and industrial domains. By virtue of innovations in measurement equipment and technology, it is possible to obtain vast amounts of data, such as seismic data concerning earthquake phenomena and environmental data from remote platforms using satellites. However, these phenomena may involve multidimensional states over various timescales. It is impractical to process the full-state measurements in real time because this is computationally expensive and therefore not conducive to fast state estimation for low-latency and high-bandwidth control.

To navigate this issue, dimensionality reduction is a promising approach.  There are often a few dominant low-dimensional patterns, which may well explain the high-dimensional data, in many natural science systems. Singular value decomposition (SVD) provides a systematic way to determine a low-dimensional approximation of high-dimensional data based on proper orthogonal decomposition (POD)\cite{berkooz1993proper,taira2017modal}. 
Thus, SVD enables us to exploit the significant POD modes in the data for low-dimensional representations. 

Sparse sensing is also important because the number of sensors is often limited because of the cost associated with their placement and computational constraints. Low-dimensionally approximated full states can be reconstructed from a small subset of measurements by sparse sensors.  
Thus, it is important to optimize sensor placement to exploit significant low-dimensional patterns based on efficient reduced-order models. This idea was adopted by Manohar et al.\cite{manohar2018data}, and a sparse-sensor-placement algorithm has been developed and discussed\cite{manohar2019optimized,manohar2018optimal}.
Furthermore, sensor selection based on POD is a data-driven approach without the requirement for governing equations. Such data-driven sensing generally needs to determine the optimal sensor locations from a large amount of candidates. Hence, a fast greedy optimization method is required for high-performance computing or feedback control.

\begin{center}
\begin{table*}[bht]
    \small
    \centering
    \caption{Summary of proposed optimization methods for sensor selection} 
    \label{table:Summary_optimization_method}
    \begin{tabular}{l|p{4cm}|p{8cm}}
        \hline
        Optimality criteria & Objective function & Scheme\\
        \hline\hline
        D-optimality & Maximization of the determinant of the Fisher information matrix & \begin{tabular}{p{8cm}}
            \begin{itemize}
                \item Convex relaxation method\cite{joshi2009sensor,nonomura2021randomized} 
                \item Greedy method 
                \begin{itemize}
                     \item QR-based greedy method\cite{manohar2018data,manohar2019optimized,manohar2018optimal}
                    \item Hybrid greedy method of QR and a straightforward D-optimality-based method\cite{saito2019determinant}
                    \item Noise-robust greedy method\cite{yamada2021fast}
                \end{itemize}
            \end{itemize}
        \end{tabular} \\
        \hline
        A-optimality & Minimization of the trace of Fisher information matrix &
        \begin{tabular}{p{8cm}} 
            \begin{itemize}
                \item Greedy method\cite{krause2008near,nguyen2018efficient}
                \item Convex relaxation method\cite{nagata2020data}
                \item Proximal splitting algorithm-based method\cite{nagata2020data}
            \end{itemize}
        \end{tabular} \\
        \hline
        E-optimality & Maximization of the minimum eigenvalue of the Fisher information matrix & 
        \begin{tabular}{p{8cm}} 
            \begin{itemize}
                \item Greedy method\cite{nguyen2018efficient}
                \item Quasi-greedy method using a lower bound\cite{krause2008near}
            \end{itemize}
        \end{tabular} \\
        \hline
    \end{tabular}
\end{table*}
\end{center}

The optimal sensor selection problem is closely related to the optimal design of experiments, which provides small values of the variances of estimated parameters and predicted response\cite{Atkinson2007optimum}. Depending on the statistical criterion of optimal experimental design, the objective function of sensor selection problems is defined using the Fisher information matrix, which corresponds to the inverse of the covariance matrix of an estimator. Table~\ref{table:Summary_optimization_method} shows various optimization methods typically used for the sensor selection problem from the perspective of optimal design.
The most important design criterion is that of D-optimality, in which the determinant of the Fisher information matrix is maximized. Here, `D' stands for `determinant.' 
This criterion results in minimization of the volume of the confidence ellipsoid of the regression estimates.
Joshi and Boyd adopted D-optimality in a sensor selection problem. They proposed to utilize a convex relaxation method and solved the approximate problem of D-optimal design\cite{joshi2009sensor}. This algorithm was recently improved with the development of a randomized algorithm\cite{nonomura2021randomized}. Manohar et al.\cite{manohar2018data} proposed a greedy method based on the discrete-empirical-interpolation method (DEIM) and QR-DEIM (QDEIM) \cite{drmac2016new,chaturantabut2010nonlinear}; these are methods in the framework of reduced-order modeling using sparse sampling points. Manohar et al. showed that this method was advantageous in terms of being significantly faster than the convex optimization method\cite{manohar2018data}. Their greedy method optimizes the sensor location by QR-pivoting the row vector of the sensor-candidate or related matrix. This greedy method was shown\cite{saito2020data} to correspond to selecting the row vector the norm of which is the maximum and eliminating its component from the rest of the matrix using a Gram-Schmidt procedure, though QR implementation is much faster in practical contexts. More recently, Saito et al.\cite{saito2019determinant} mathematically illustrated that the objective function adopted by Manohar et al.\cite{manohar2018data} corresponds to the maximization of the D-optimality objective function when the number of sensors is less than the number of state variables, and derived a unified expression of the objective function based on D-optimality regardless of the number of sensors. Furthermore, they successfully proposed an efficient greedy method based on D-optimality: a hybrid of greedy methods based on QR decomposition and the straightforward maximization of the determinant in the case where the number of sensors is less and greater than that of state variables, respectively. The proposed method is confirmed to provide nearly optimal sensors as well as the convex approximation method but significantly reduces the computational cost compared to the convex approximation method and QR-based greedy method\cite{saito2019determinant}. In addition, a greedy method for sensor selection problems under correlated noise has also been developed in the framework of extended D-optimality.\cite{yamada2021fast} Thus, previous studies have provided valuable knowledge on the sensor selection problem in the context of D-optimality. 

There are a variety of criteria for optimal design. Two other criteria that have a statistical interpretation in terms of the information matrix are A- and E-optimality. In A-optimality, the trace of the inverse of the information matrix, which corresponds to the total variance of the parameter estimates, is minimized. It is equivalent to minimization of the `average' variance, hence `A' stands for `averaged.' In E-optimality, the maximum eigenvalue of the information matrix is minimized, which minimizes the worst-case variance of estimation error, where `E' stands for `eigenvalue.'
Ojective functions based on these criteria have been introduced in\cite{Atkinson2007optimum,joshi2009sensor}, and greedy methods based on D-, A-, and E-optimality have been adopted in combinatorial optimization problems\cite{krause2008near,nguyen2018efficient}. 
Krause et al.\cite{krause2008near} compared placements using greedy methods based on D-, A-, and E-optimality with those using the proposed method, which maximizes the mutual information between the chosen sensors and the candidate sensor for the Gaussian process regression. The results of numerical experiments show that the proposed method tends to outperform the classical D-, A-, and E-optimal design, and the performance of the D-, A-, and E-optimality-based greedy methods varies depending on the problem. Moreover, Nguyen et al.\cite{nguyen2018efficient} extended their method in the context of A-optimality, and proposed a separable optimization approach, which only depends on spatial variations in spatio-temporal environments, in order to reduce the computational complexity of the optimization problem. However, in the previous studies\cite{krause2008near,nguyen2018efficient}, the performance characteristics of the greedy methods based on D-, A-, and E-optimality were not discussed in detail. Furthermore, since the latent state variables are equipped with prior information (averaged value and variance), and the problem is regularized from the selection of the first sensor, the formulation for the greedy sensor selection for the underdetermined situation has not been derived. Note that the formulation based on D-optimality has been discussed in Saito et al.\cite{saito2019determinant}. Thus, to the best of our knowledge, the greedy method and its performance for A- and E-optimality in underdetermined situations has not yet been investigated.

For linear dynamical systems, objective functions based on D-, A-, and E-optimal design were adopted in Summers et al.\cite{summers2015submodularity}. The determinant, trace, and minimum eigenvalue of the controllability Gramian, which corresponds to the Fisher information matrix for non-dynamical systems, are considered. Similar to linear dynamical systems, it is also important to understand the characteristics of objective functions for the reconstruction of snapshots of systems. This would provide more fundamental knowledge for the sparse sensor selection problem.

The aim of this study is to obtain insights into the objective functions suitable for the sparse sensor selection problem, especially focusing on the reconstruction of snapshots of high-dimensional data by the greedy method. For this purpose, greedy methods based on D-, A-, and E-optimality including the underdetermined situation are proposed and described, and evaluated in terms of their performance.  
Firstly, the submodularity of objective functions is investigated. Next, the performance of the greedy methods based on D-, A-, and E-optimality is evaluated with respect to two problems with ideal and nonideal conditions in terms of statistical assumptions. The indices of optimality criteria, reconstruction error, and computational cost are compared. 
The main contributions of the paper are as follows:
\begin{itemize}
    \item The Fisher information matrix for underdetermined cases is derived in the observable subspace of the measurement matrix in Section \ref{sec:formulation}.
    \item Objective functions based on A- and E-optimality for both underdetermined and overdetermined situations are derived in Section \ref{sec:formulationA} and \ref{sec:formulationE}. Greedy methods based on A- and E- optimality in the underdetermined situation without regularization terms are introduced for the first time; in previous studies, they have only been discussed for the overdetermined situation.
    \item A unified formulation of the objective function based on A-optimality is introduced and proved to be submodular, which provides the lower bound on the performance of the greedy method in Section \ref{sec:submodularityA}. On the other hand, the objective function based on E-optimality is proved to be neither submodular nor supermodular while it is monotone in the overdetermined case in Section \ref{sec:submodularityE}.
    \item Greedy methods based on D-, A-and E-optimality are adopted with respect to ideal and nonideal problems in terms of the statistics, and guidance on selecting a suitable objective function for a given dataset and situation is provided in Section \ref{sec:results}. 
\end{itemize}

\section{Formulation and Algorithms for Sensor Selection Problem}
\label{sec:formulation}
We consider the linear system given by
\begin{align}
\bm{y}&=\bm{HUz}=\bm{C}\bm{z}\label{eq:y_cz},
\end{align}
where $\bm{y} \in \mathbb{R}^p$, $\bm{H} \in \mathbb{R}^{p\times n}$, $\bm{U} \in \mathbb{R}^{n\times r}$, $\bm{z} \in \mathbb{R}^r$, and $\bm{C} \in \mathbb{R}^{p\times r}$ are the observation vector, the sparse sensor location matrix, and the sensor candidate matrix, the latent state vector, and the measurement matrix ($\bm{C}=\bm{HU}$), respectively. Here, the element corresponding to the sensor location is unity and the others are 0 in each row of $\bm{H}$. In addition, $p$, $n$, and $r$ are the number of sensors, the number of spatial dimensions, and the number of latent state variables, respectively. The system above represents the problem of choosing $p$ observations out of $n$ sensor candidates for the estimation of the state variables. The various sensor selections can be expressed by changing $\bm{H}$ and by selecting row vectors as sensors from the sensor candidate matrix $\bm{U}$.   
A graphical image of the foregoing equation is shown in Fig.~\ref{fig:Graphicalimag}. 

\begin{figure}[htbp]
    \centering
    \includegraphics[width=3in]{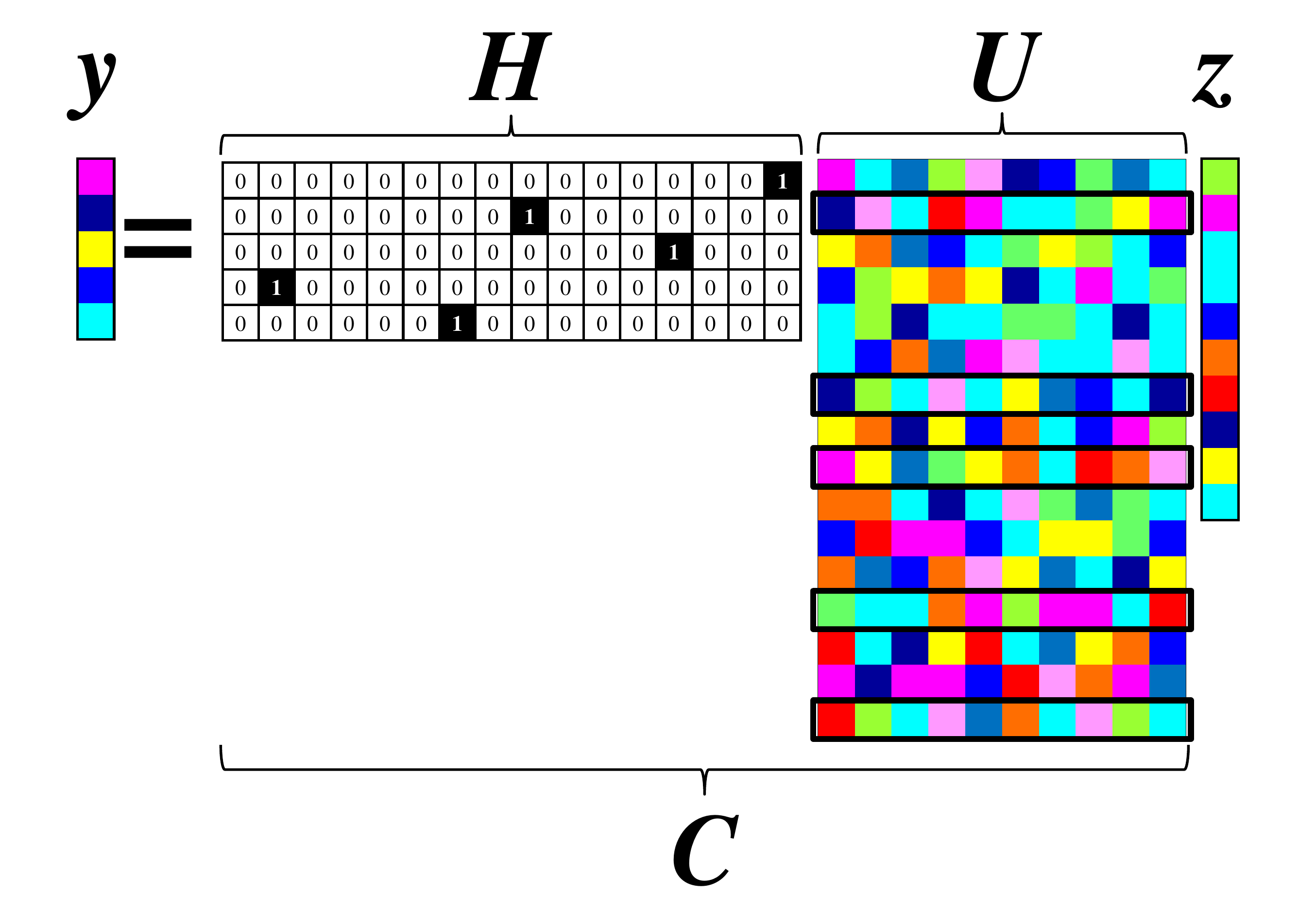}
    \caption{Graphical image for sensor matrix $\bm{H}$ on \eqref{eq:y_cz}\cite{saito2019determinant}}
    \label{fig:Graphicalimag}
\end{figure}

The estimated parameters $\bm{\hat{z}}$ can be obtained by the pseudo-inverse operation when uniform independent Gaussian noise $\mathcal{N}(\bm{0},\sigma^2\bm{I})$ is imposed on the observations as follows:
\begin{align}
    \bm{\hat{z}}
    &=\bm{C}^{+}\bm{y}
    =\left\{\begin{array}{cc}\bm{C}^{\top}\left(\bm{C}\bm{C}^{\top}\right)^{-1}\bm{y}, & p\le r \\
    \left(\bm{C}^{\top}\bm{C}\right)^{-1}\bm{C}^{\top}\bm{y}, & p>r \end{array} 
    \right.\label{eq:LS_estimation}.
\end{align}

The covariance matrix of estimation error is expressed as follows:
\begin{align}
    &{E}\left[\left(\bm{z}-\bm{\hat{z}}\right)
        \left(\bm{z}-\bm{\hat{z}}\right)^{\top}\right] \notag\\
    &=\left\{\begin{array}{cc}
        (\bm{I}-\bm{P}_{\bm{C}})E[\bm{z}\bm{z}^{\top}](\bm{I}-\bm{P}_{\bm{C}})+\sigma^2\bm{C}^{\top}\left(\bm{C}\bm{C}^{\top}\right)^{-2}\bm{C}, & p\le r, \\
        \sigma^2\left(\bm{C}^{\top}\bm{C}\right)^{-1}, & p>r, 
    \end{array}\right.\label{eq:error_covarz}
\end{align}
where $\bm{P}_{\bm{C}}=\bm{C}^{\top}(\bm{C}\bm{C}^{\top})^{-1}\bm{C}$ is the projection matrix onto the row vector space of $\bm{C}$.
Here, the estimation error in the case where $p<r$ is further considered in the observable subspace by the transformation. The full singular value decomposition of $\bm{C}$ is given as follows:
\begin{align}
\bm{C}&=\bm{U}_{\bm{C}}\bm{\Sigma}_{\bm{C}}\bm{V}_{\bm{C}}^{\top} \notag \\
&=
\left\{\begin{array}{cc}
\bm{U}_{\bm{C}} 
\left[\begin{array}{cc}\tilde{\bm{\Sigma}_{\bm{C}}} & \bm{0} \end{array}\right]
\left[\begin{array}{c}\tilde{\bm{V}_{\bm{C}}}^{\top} \\ \bar{\bm{V}_{\bm{C}}}^{\top} \end{array}\right], & p \le r,\\
\bm{U}_{\bm{C}}\left[\begin{array}{c}\tilde{\bm{\Sigma}_{\bm{C}}}\\\bm{0}\end{array}\right] \tilde{\bm{V}_{\bm{C}}}^{\top}, & p > r. 
\end{array}\right.
\end{align}
The coordinate transform by $\bm{\zeta}=\tilde{\bm{V}_{\bm{C}}}^{\top}\bm{z}$ is considered, whereas $\bm{V}_{\bm{C}}$ is an orthonormal (or unitary) matrix and the amplitude of the error in the observable subspace does not change. This transformation only gives the observable components in the latent state vector. The error in $\bm{\zeta}$ becomes as follows:
\begin{align}
\bm{E}\left[\left(\bm{\zeta}-\bm{\hat{\zeta}}\right)
    \left(\bm{\zeta}-\bm{\hat{\zeta}}\right)^{\top}\right] 
    =\left\{ \begin{array}{cc}
    \sigma^2\bm{U}_{\bm{C}}^{\top} \left(  \bm{C}\bm{C}^{\top} \right)^{-1} \bm{U}_{\bm{C}}, & p \le r, \\
    \sigma^2\tilde{\bm{V}_{\bm{C}}}^{\top} \left(\bm{C}^{\top}\bm{C}\right)^{-1}\tilde{\bm{V}_{\bm{C}}}, & p > r. 
    \end{array} \right. \label{eq:error_covarzeta}
\end{align}
Here, the only error covariance in the observable space is obtained in the case of $p \le r$ and the amplitude of the error covariance matrix is given by $(\bm{CC}^{\top})^{-1}$ because row vectors in $\tilde{\bm{V}_{\bm{C}}}^{\top}$ are orthonormal to each other and their absolute value is unity. Here, the first term of (\ref{eq:error_covarz}) in the case of $p<r$ disappears in (\ref{eq:error_covarzeta}) because the term only has components in the unobservable space. 
In the optimal design, the error covariance matrix and its inverse, the latter of which corresponds to the Fisher information matrix, are employed, and the optimality criteria are provided. 
Sensor selection problems can be defined based on the optimality criterion. There are a variety of optimality criteria, several of which are addressed hereafter. 

\subsection{Objective function based on D-optimality}
\label{sec:formulationD}
A D-optimal design maximizes the determinant of the Fisher information matrix. It is equivalent to minimizing the determinant of the error covariance matrix, resulting in minimizing the volume of the confidence ellipsoid of the regression estimates of the linear model parameters. 
Therefore, the problem can be expressed as the optimization problem
\begin{align}
    &\mathrm{maximize}\,\,f_{\mathrm{D}} \nonumber \\
    &f_{\mathrm{D}}\,=\,\left\{\begin{array}{cc}
        \mathrm{det}\,\left(\bm{C}\bm{C}^{\top}\right), & p\le r, \\
        \mathrm{det}\,\left(\bm{C}^{\top}\bm{C}\right), & p>r.
    \end{array}\right.\label{eq:obj_det}
\end{align}
All the combinations of $p$ sensors out of $n$ sensor candidates should be searched by a brute-force algorithm for the real-optimized solution of (\ref{eq:obj_det}), which takes an enormous amount of computational time ($O(n!/(n-p)!/p!)\approx O(n^{p})$).
Instead, greedy methods for the suboptimized solution have been devised by adding a sensor step by step. 
For the D-optimal criterion, the objective function for the greedy method has already been demonstrated in the literature\cite{saito2019determinant}.

\subsection{Objective function based on A-optimality}
\label{sec:formulationA}
An A-optimal design minimizes the mean square error in estimating the parameter $z$. Hence, the objective function is the trace of the error covariance matrix in A-optimal design. The sensor selection problem can be expressed as the following optimization problem:
\begin{align}
    &\mathrm{minimize}\,\,f_{\mathrm{A}} \nonumber \\
    &f_{\mathrm{A}}\,=\,\left\{\begin{array}{cc}
        \mathrm{tr}\,\left[\left(\bm{C}\bm{C}^{\top}\right)^{-1}\right], & p\le r, \\
        \mathrm{tr}\,\left[\left(\bm{C}^{\top}\bm{C}\right)^{-1}\right], & p>r.
    \end{array}\right.\label{eq:obj_tr}
\end{align}
In the step-wise selection of the greedy method, selection of only the $k$th sensor is carried out in the $k$th step under the condition that the sensors up to $(k-1$)th are already determined. 
Let $S \subset \{ 1, 2, \dots, n \}$ be a set of labels of selected sensors and $\bm{C}_{S}$ be the corresponding sensor matrix.
Specifically, if ${S} = \{ i_{1}, i_{2}, \dots, i_{k} \}$, then $\bm{C}_{S}$ is given by
\begin{align}
\bm{C}_{{S}}=
    \bm{C}_{k} 
    =\left[ 
        \begin{array}{cccc}
            \bm{u}_{i_{1}}^{\top} &
            \bm{u}_{i_{2}}^{\top} & \cdots &
            \bm{u}_{i_{k}}^{\top}
        \end{array}
    \right]^{\top},
\end{align}
where $\bm{u}_{i_{k}}$ is the corresponding row vector of the sensor-candidate matrix $\bm{U}$. 
Therefore, the sensor index chosen in the $k$th step of the greedy method can be described as follows:
\begin{align}
    i_{k}=&\argmin_{i_k}\,\,f_{\mathrm{AG}}, \nonumber 
\end{align} where
\begin{align}
    &f_{\mathrm{AG}}\,=\,\left\{\begin{array}{cc}
        \mathrm{tr}\,\left[\left(\bm{C}_k\bm{C}_k^{\top}\right)^{-1}\right], & p\le r, \\
        \mathrm{tr}\,\left[\left(\bm{C}_k^{\top}\bm{C}_k\right)^{-1}\right], & p>r.
    \end{array}\right. \label{eq:obj_tr2}
\end{align}
In the case of $p \le r$, 
\begin{align}
    &\mathrm{tr}\left[\left(\bm{C}_{k}\bm{C}_{k}^{\top}\right)^{-1}\right] \nonumber \\
    =\,&\mathrm{tr}\left[ \left(\left[
        \begin{array}{cc} \bm{C}_{k-1} \\ \bm{u}_{i_k} \end{array}
        \right] \left[
        \begin{array}{cc} \bm{C}_{k-1}^{\top} &  \bm{u}_{i_k}^{\top} \end{array}
        \right]\right)^{-1} \right] \nonumber \\
    =\,&\mathrm{tr}\left[\left(\bm{C}_{k-1}\bm{C}_{k-1}^{\top}\right)^{-1}\right] \nonumber \\
    &+\mathrm{tr}\left[\frac
        {\left(\bm{C}_{k-1}\bm{C}_{k-1}^{\top}\right)^{-1}
        \bm{C}_{k-1}\bm{u}_{i_k}^{\top}\bm{u}_{i_k}\bm{C}_{k-1}^{\top}
        \left( \bm{C}_{k-1}\bm{C}_{k-1}^{\top} \right) ^{-1}}
        {\bm{u}_{i_k} \left( \bm{I}-\bm{C}_{k-1}^{\top}
        \left(\bm{C}_{k-1}\bm{C}_{k-1}^{\top}\right)^{-1}
        \bm{C}_{k-1} \right) \bm{u}_{i_k}^{\top} }\right] \nonumber \\
    &+\mathrm{tr}\left[\frac{1}
        {\bm{u}_{i_k}\left(\bm{I}-\bm{C}_{k-1}^{\top}
        \left(\bm{C}_{k-1}\bm{C}_{k-1}^{\top}\right)^{-1}
        \bm{C}_{k-1}\right)\bm{u}_{i_k}^{\top}}\right].
    \label{eq:obj_tr_greedy_r}
\end{align}
Here, it is not necessary to evaluate the first term of (\ref{eq:obj_tr_greedy_r}) of the last equation since $\left(\bm{C}_{k-1}\bm{C}_{k-1}^{\top}\right)^{-1}$ is already determined in the $k$th step. Considering the cyclic property of trace, the greedy methods can be simply written as follows:
\begin{align}
    i_{k}=\argmin_{i_k}\,\,\frac
        {\bm{u}_{i_k}\bm{C}_{k-1}^{\top}\left(\bm{C}_{k-1}\bm{C}_{k-1}^{\top}\right) ^{-2}\;
        \bm{C}_{k-1}\bm{u}_{i_k}^{\top} + 1}
        {\bm{u}_{i_k}\left(\bm{I}-\bm{C}_{k-1}^{\top}\left(\bm{C}_{k-1}\bm{C}_{k-1}^{\top}\right)^{-1}\bm{C}_{k-1}\right)\bm{u}_{i_k}^{\top}},\,p\le r.
\end{align}
In the case of $p>r$,
\begin{align}
    &\mathrm{tr}\left[\left(\bm{C}_{k}^{\top}\bm{C}_{k}\right)^{-1}\right] \nonumber \\
    =\,&\mathrm{tr}\left[\left(\bm{C}_{k-1}^{\top}\bm{C}_{k-1}+\bm{u}_{i}^{\top}\bm{u}_{i}\right)^{-1} \right] \nonumber \\
    =\,&\mathrm{tr}\left[\left(\bm{C}_{k-1}^{\top}\bm{C}_{k-1}\right)^{-1}\right] \nonumber \\
     &-\mathrm{tr}\left[\left(\bm{C}_{k-1}^{\top}\bm{C}_{k-1}\right)^{-1}
     \bm{u}_{i}^{\top}\left(1+\bm{u}_{i}\left(\bm{C}_{k-1}^{\top}\bm{C}_{k-1}\right)^{-1}\bm{u}_{i}^{\top}\right)^{-1}\right. \nonumber \\
     &\qquad\left.\bm{u}_{i}\left(\bm{C}_{k-1}^{\top}\bm{C}_{k-1}\right)^{-1}\right].
    \label{eq:obj_tr_greedy_p}
\end{align}
Taking into account the fact that the first term of (\ref{eq:obj_tr_greedy_p}) does not contribute in the $k$th sensor selection, the greedy method can again be simply written as follows:
\begin{align}
    i_{k}=\argmin_{i_k}\,\,-\frac
        {\bm{u}_{i}
        \left(\bm{C}_{k-1}^{\top}\bm{C}_{k-1}\right)^{-2}
        \bm{u}_{i}^{\top}}
        {1+\bm{u}_{i}\left(\bm{C}_{k-1}^{\top}\bm{C}_{k-1}\right)^{-1}\bm{u}_{i}^{\top}},\,p>r.
\end{align}
In summary, the greedy method can be written as follows:
\begin{align}
    i_{k}=\left\{\begin{array}{ll}
        \argmin_{i_k}\,\,\frac
        {\bm{u}_{i_k}\bm{C}_{k-1}^{\top}\left(\bm{C}_{k-1}\bm{C}_{k-1}^{\top}\right) ^{-2}\;
        \bm{C}_{k-1}\bm{u}_{i_k}^{\top} + 1}
        {\bm{u}_{i_k}\left(\bm{I}-\bm{C}_{k-1}^{\top}\left(\bm{C}_{k-1}\bm{C}_{k-1}^{\top}\right)^{-1}\bm{C}_{k-1}\right)\bm{u}_{i_k}^{\top}}, & p\le r, \\
        \argmin_{i_k}\,\,-\frac
        {\bm{u}_{i}
        \left(\bm{C}_{k-1}^{\top}\bm{C}_{k-1}\right)^{-2}
        \bm{u}_{i}^{\top}}
        {1+\bm{u}_{i}\left(\bm{C}_{k-1}^{\top}\bm{C}_{k-1}\right)^{-1}\bm{u}_{i}^{\top}}, & p>r.
    \end{array}\right. \label{eq:obj_tr_greedy}
\end{align}

\subsection{Objective function based on E-optimality}
\label{sec:formulationE}
An E-optimal design minimizes the worst-case variance of estimation error, which corresponds to the maximum eigenvalue of the Fisher information matrix. Therefore, the sensor selection problem can be expressed as the following optimization problem:
\begin{align}
    &\mathrm{maximize}\,\,f_{\mathrm{E}} \nonumber \\
    &f_{\mathrm{E}}\,=\,\left\{\begin{array}{cc}
        \lambda_{\mathrm{min}}\,\left(\bm{C}\bm{C}^{\top}\right), & p\le r, \\
        \lambda_{\mathrm{min}}\,\left(\bm{C}^{\top}\bm{C}\right), & p>r.
    \end{array}\right.\label{eq:obj_eig}
\end{align}
The sensor index chosen in the $k$th step of the greedy method can be written as follows: 
\begin{align}
    i_{k}=&\argmax_{i_{k}}\,\,f_{\mathrm{EG}} , \nonumber 
\end{align}where
\begin{align}
    &f_{\mathrm{EG}}\,=\,\left\{\begin{array}{cc}
        \lambda_{\mathrm{min}}\,\left(\bm{C}_k\bm{C}_k^{\top}\right), & p\le r, \\
        \lambda_{\mathrm{min}}\,\left(\bm{C}_k^{\top}\bm{C}_k\right), & p>r.
    \end{array}\right. \nonumber \\
    &\qquad=\,\left\{\begin{array}{cc}
        \lambda_{\mathrm{min}}\,\left(
            \left[\begin{array}{cc} \bm{C}_{k-1} \\ \bm{u}_{i_k} \end{array}
            \right] \left[
            \begin{array}{cc} \bm{C}_{k-1}^{\top} &  \bm{u}_{i_k}^{\top} \end{array}
            \right]
        \right), & p\le r, \\
        \lambda_{\mathrm{min}}\,\left(
            \bm{C}_{k-1}^{\top}\bm{C}_{k-1}+\bm{u}_{i}^{\top}\bm{u}_{i}
        \right), & p>r.
    \end{array}\right.\label{eq:obj_eig_greedy}
\end{align}

\section{Submodularity and approximation rate}
\label{sec:submodularity}
In what follows, the objective functions are mathematically redefined as set functions and their structural properties are explored. Submodularity in the set functions plays an important role in combinatorial optimization and provides a lower bound of the greedy method.

A modular function has the property that each element of a subset provides an independent contribution to the function value. If the objective function is modular, it is straightforward to solve the optimization problem by the greedy method by evaluating the objective function in a step-by-step manner. On the other hand, also for monotone increasing submodular functions, which are NP-hard, the greedy method can be utilized to obtain a solution that is likely to be close to the optimal solution. 

\begin{definition}[Submodularity] 
The function $f : 2^{\{ 1, 2, \dots, n \}} \to \mathbb{R}$ is called submodular if for any  $S, T \subset \{ 1, 2, \dots, n \} $ with $S \subset T$ and $i \in \{ 1, 2, \dots, n \} \setminus T$, the function $f$ satisfies
\begin{align}
     f\bigl( S \cup \{ i \} \bigr) - f(S)
 \ge f\bigl( T \cup \{ i \} \bigr) - f(T).
 \label{eq:def_submodularity}
\end{align}
\end{definition}

\begin{definition}[Monotonicity] 
The function $f : 2^{\{ 1, 2, \dots, n \}} \to \mathbb{R}$ is called monotone increasing if for any $S, T \subset \{ 1, 2, \dots, n \} $ with $S \subset T$ and $i \in \{ 1, 2, \dots, n \} \setminus T$, the function $f$ satisfies
\begin{align}
    S \subset T \Rightarrow f(S) \le f(T).
    \label{eq:def_monotonicity}
\end{align}
\end{definition}

The performance of the greedy method is guaranteed by a well-known lower bound when the objective function is monotone and submodular. Nemhauser et al. have proved that the following inequality holds \cite{nemhauser1978analysis}:
\begin{align}
  f(S_{\mathrm{greedy}})
& \ge \left(1 - \left(1 - \frac{1}{k}\right) \right)^{k}
        f(S_{\mathrm{opt}}) \notag \\
& \ge \left(1 - \frac{1}{e} \right)
        f(S_{\mathrm{opt}}) \notag \\
& \ge 0.63 f(S_{\mathrm{opt}}),
\end{align}
where $k$ denotes the number of sensors, $S_{\mathrm{opt}}$ is an optimal solution, and $S_{\mathrm{greedy}}$ is the solution obtained from applying the greedy method.

We now evaluate the submodularity and monotonicity of objective functions introduced in the previous section.

\subsection{Objective function based on D-optimality}
\label{sec:submodularityD}
Saito et al. derived a unified expression of the objective function in both cases where $p \le r$ and $p>r$ for D-optimality. The objective function in (\ref{eq:obj_det}) is redefined to be ${\mathrm{det}}\left(\bm{C}^{\top}\bm{C}+\epsilon \bm{I}\right)$, where $\epsilon$ is a sufficiently small number. It has been proved to be the monotone submodular function\cite{saito2019determinant}.

\subsection{Objective function based on A-optimality}
\label{sec:submodularityA}
We now consider the trace of the error covariance matrix in A-optimal design. 
Similar to the previous study\cite{saito2019determinant}, we first introduce the unified formulation for both cases in which the number of sensors is less than or equal to that of the modes and the number of sensors is greater than that of the modes.
\begin{align}
    &\argmin \mathrm{tr} \left[
        \left(\bm{C}^{\top}\bm{C}+\epsilon \bm{I}\right)^{-1}
        \right] \notag \\
    =&\argmin \mathrm{tr} \left[
        -\frac{1}{\epsilon}\bm{C}\bm{C}^{\top}\left(\epsilon\bm{I}+\bm{C}\bm{C}^{\top}\right)^{-1}
        \right] + \frac{r}{\epsilon} \notag \\
    =&\argmin \mathrm{tr} \left[
        \begin{array}{l}
            -\frac{1}{\epsilon}\left(\epsilon \bm{I}+\bm{C}\bm{C}^{\top}\right)\left(\epsilon\bm{I}+\bm{C}\bm{C}^{\top}\right)^{-1}\\ \qquad\qquad\qquad + \left(\epsilon\bm{I}+\bm{C}\bm{C}^{\top}\right)^{-1}
        \end{array}\right] + \frac{r}{\epsilon} \notag \\
    =&\argmin \mathrm{tr} \left[
        -\frac{1}{\epsilon}\bm{I} +\left(\epsilon\bm{I}+\bm{C}\bm{C}^{\top}\right)^{-1}
        \right] + \frac{r}{\epsilon} \notag \\
    =&\argmin \mathrm{tr} \left[
        \left(\epsilon\bm{I}+\bm{C}\bm{C}^{\top}\right)^{-1}
        \right] + \frac{r-p}{\epsilon} \notag \\
    \approx&\argmin \mathrm{tr} \left[\left(\bm{C}\bm{C}^{\top}\right)^{-1}\right]
\end{align}
Therefore, the present objective function for the case where $p<r$ can be considered to be an asymptotic formulation of the unified objective function with a sufficiently small regularization term. 
For the proof of submodularity and monotonicity in this section, the function with the regularization term is employed as the objective function based on A-optimality hereafter.

Define a function $f_{\mathrm{A}}: 2^{ \{ 1, 2, \dots, n \} } \to \mathbb{R} $ by
\begin{align}
    f_{\mathrm{A}}(S) 
    &= - \mathrm{tr} \left[ \left( 
            \bm{C}_{S}^{\top} \bm{C}_{S}^{} + \epsilon \bm{I} 
            \right)^{-1} \right]
       + \frac{r}{\epsilon }.
    \label{eq:obj_tr_epsilon}
\end{align}
for each $S \subset \{ 1, 2, \dots, n \}$.
An offset term $r / \epsilon$ is added so that the value of $f$ for the empty set could be regarded as $f_{\mathrm{A}}(\emptyset) = 0$.
\begin{proposition} 
    $f_{\mathrm{A}}$ defined by (\ref{eq:obj_tr_epsilon}) is submodular.
    \label{thm:submodularity}
\end{proposition}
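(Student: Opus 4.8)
The plan is to establish the ``diminishing returns'' reformulation of the submodularity condition (\ref{eq:def_submodularity}): for every $S \subseteq \{1,\dots,n\}$ and every pair of distinct indices $i,j \in \{1,\dots,n\}\setminus S$,
\begin{align}
    f_{\mathrm{A}}(S\cup\{i\}) - f_{\mathrm{A}}(S)
    \;\ge\; f_{\mathrm{A}}(S\cup\{i,j\}) - f_{\mathrm{A}}(S\cup\{j\}).
    \label{eq:pf_dimret}
\end{align}
Condition (\ref{eq:def_submodularity}) for general $S \subset T$ then follows from (\ref{eq:pf_dimret}) by a routine telescoping over the elements of $T\setminus S$. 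Put $\bm{M}_S := \bm{C}_S^{\top}\bm{C}_S + \epsilon\bm{I}_r \succ 0$, so that $\bm{M}_{S\cup\{i\}} = \bm{M}_S + \bm{u}_i^{\top}\bm{u}_i$. Applying the matrix-inversion (Sherman--Morrison) lemma --- the manipulation already performed in (\ref{eq:obj_tr_greedy_p}), now with the extra $\epsilon\bm{I}_r$ --- gives the marginal gain in closed form:
\begin{align}
    \delta_i(S) \;:=\; f_{\mathrm{A}}(S\cup\{i\}) - f_{\mathrm{A}}(S)
    \;=\; \frac{\bm{u}_i\bm{M}_S^{-2}\bm{u}_i^{\top}}{\,1 + \bm{u}_i\bm{M}_S^{-1}\bm{u}_i^{\top}\,}
    \;\ge\; 0,
    \label{eq:pf_margin}
\end{align}
which in passing also gives monotonicity (\ref{eq:def_monotonicity}). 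It thus remains to show $\delta_i(S) \ge \delta_i(S\cup\{j\})$.

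To that end I would substitute the rank-one downdate $\bm{M}_{S\cup\{j\}}^{-1} = \bm{M}_S^{-1} - (1 + \bm{u}_j\bm{M}_S^{-1}\bm{u}_j^{\top})^{-1}\,\bm{M}_S^{-1}\bm{u}_j^{\top}\bm{u}_j\bm{M}_S^{-1}$ into (\ref{eq:pf_margin}) written at $S\cup\{j\}$. After clearing the strictly positive denominators, $\delta_i(S) \ge \delta_i(S\cup\{j\})$ turns into a polynomial inequality in the six scalars $\bm{u}_i\bm{M}_S^{-1}\bm{u}_i^{\top}$, $\bm{u}_j\bm{M}_S^{-1}\bm{u}_j^{\top}$, $\bm{u}_i\bm{M}_S^{-1}\bm{u}_j^{\top}$, $\bm{u}_i\bm{M}_S^{-2}\bm{u}_i^{\top}$, $\bm{u}_i\bm{M}_S^{-2}\bm{u}_j^{\top}$, $\bm{u}_j\bm{M}_S^{-2}\bm{u}_j^{\top}$, which are tied together by Cauchy--Schwarz / positive-semidefiniteness relations among $\bm{M}_S^{-1/2}\bm{u}_i^{\top}$, $\bm{M}_S^{-1/2}\bm{u}_j^{\top}$, $\bm{M}_S^{-1}\bm{u}_i^{\top}$, $\bm{M}_S^{-1}\bm{u}_j^{\top}$ (e.g. $(\bm{u}_i\bm{M}_S^{-1}\bm{u}_j^{\top})^2 \le (\bm{u}_i\bm{M}_S^{-1}\bm{u}_i^{\top})(\bm{u}_j\bm{M}_S^{-1}\bm{u}_j^{\top})$, and likewise with $\bm{M}_S^{-2}$).

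The last step --- proving that scalar inequality --- is the main obstacle, and it is the point at which the structure of the problem has to be used: the inequality is \emph{not} true for an arbitrary $\bm{M}_S \succ 0$ and arbitrary rows $\bm{u}_i,\bm{u}_j$. (Two rows that are individually almost aligned with a well-conditioned eigendirection of $\bm{M}_S$ but whose difference lies in an ill-conditioned eigendirection can be chosen so that $\delta_i(S\cup\{j\}) > \delta_i(S)$, i.e. ``jointly informative'' sensors can beat the sum of their individual contributions.) What makes the claim correct is that the candidate matrix $\bm{U}$ has orthonormal columns --- the standing assumption of the POD-based setting --- whence $\bm{C}_T^{\top}\bm{C}_T \preceq \bm{U}^{\top}\bm{U} = \bm{I}_r$ for every $T \subseteq \{1,\dots,n\}$, so that $\epsilon\bm{I}_r \preceq \bm{M}_S \preceq (1+\epsilon)\bm{I}_r$, $\|\bm{u}_i\| \le 1$, $\|\bm{u}_j\| \le 1$, and more precisely $\bm{u}_i^{\top}\bm{u}_i + \bm{u}_j^{\top}\bm{u}_j \preceq (1+\epsilon)\bm{I}_r - \bm{M}_S$. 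Feeding these spectral bounds into the polynomial inequality, in combination with the Cauchy--Schwarz relations, should dominate the cross terms and close the argument; I expect essentially all of the remaining work to go into finding the precise way to couple the row-norm bound with Cauchy--Schwarz. One might hope to bypass this via the integral representation $f_{\mathrm{A}}(S) = \int_0^{\infty} e^{-\epsilon t}\,\mathrm{tr}\bigl[\bm{I}_r - e^{-t\sum_{k\in S}\bm{u}_k^{\top}\bm{u}_k}\bigr]\,dt$, which would reduce the claim to submodularity of $S \mapsto \mathrm{tr}[\bm{I}_r - e^{-t\sum_{k\in S}\bm{u}_k^{\top}\bm{u}_k}]$ for each fixed $t>0$; but since $X\mapsto e^{-X}$ is not operator antitone this too ultimately leans on the row-norm bound, so it does not obviously simplify matters.
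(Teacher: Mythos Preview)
Your reduction to the one-element diminishing-returns inequality and the Sherman--Morrison expression for the marginal gain are both correct, and so is your central observation that $\delta_i(S)\ge\delta_i(S\cup\{j\})$ can fail for a generic positive-definite $\bm M_S$ and generic rows $\bm u_i,\bm u_j$. Indeed this already furnishes a counterexample to the proposition as stated: with $r=2$, $\epsilon=0.01$, $S=\{1\}$, $\bm u_1=(0,\sqrt{0.99})$, $\bm u_i=(1,1)$, $\bm u_j=(0,1)$ one computes $\delta_i(S)\approx 98.05<98.52\approx\delta_i(S\cup\{j\})$. The gap in your proposal is therefore twofold. First, you try to rescue the argument by imposing orthonormal columns on $\bm U$, but that is not a hypothesis of Proposition~\ref{thm:submodularity} and is violated, for instance, by the Gaussian random $\bm U$ used in Section~\ref{sec:results-random}. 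Second, even under that added hypothesis you stop short of proving the scalar inequality: you only outline that Cauchy--Schwarz together with the row-norm bound ``should'' dominate the cross terms, and you yourself identify this as the main obstacle. So as a proof of the stated proposition the attempt is incomplete, and in the stated generality the proposition is in fact false.

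For comparison, the paper does not go through your scalar reduction; it keeps a general $S\subset T$, applies the Woodbury identity with $\bm C_{T\setminus S}$, and closes by asserting $\bm A^{-2}-\bm B^{-2}\succeq 0$ for $\bm A=\bm C_S^{\top}\bm C_S+\epsilon\bm I$ and $\bm B=\bm A+\bm u_i^{\top}\bm u_i$. That assertion is exactly the step you were right to distrust: the map $\bm X\mapsto\bm X^{-2}$ is not operator antitone on positive-definite matrices (only $\bm X\mapsto\bm X^{-1}$ is), and the same numbers above give $\bm A=\mathrm{diag}(0.01,1)$ with the $(2,2)$ entry of $\bm A^{-2}-\bm B^{-2}$ strictly negative. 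Thus the paper's route and yours break down at the same place; your write-up is more candid about where the difficulty lies, but neither argument closes the gap, because without additional structural assumptions on $\bm U$ there is no gap to close.
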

\begin{proof}
Take arbitrary $S, T \subset \{ 1, 2, \dots, n \}$ such that $S \subset T$.
For simplicity of notation, let us set 
$\bm{A} := \bm{C}_S^{\top} \bm{C}_S + \varepsilon \bm{I}$ and 
$\bm{B} := \bm{C}_S^{\top} \bm{C}_S + \bm{u}_i^{\top} \bm{u}_i + \varepsilon \bm{I}$.
It follows from the definition of $f_{\mathrm{A}}$ that, for any 
$i \in \{ 1, 2, \dots, n\} \setminus T,$
\begin{align}
    &f_{\mathrm{A}} (S \cup \{i\}) - f(S)_{\mathrm{A}} \notag \\
    &= -\mathrm{tr} \left[ \left(
                \bm{C}_S^{\top}\bm{C}_S + \bm{u}_i^{\top} \bm{u}_i
                + \epsilon \bm{I}
              \right)^{-1} \right]
      + \mathrm{tr} \left[ \left( 
      \bm{C}_S^{\top} \bm{C}_S + \epsilon \bm{I} \right)^{-1} \right] \notag \\
    &= \mathrm{tr} \left( \bm{A}^{-1} - \bm{B}^{-1} \right). \label{eq:submodular_Si-S}
\end{align}
Due to the positive definiteness of $\bm{A}$, the value of (\ref{eq:submodular_Si-S}) is positive for any $\bm{u}_{i} \neq 0$. 
We next evaluate $f(T \cup \{ i \}) - f(T)$. Since $S \subset T,$ there exits a permutation matrix $\bm{P}$ such that
\begin{equation}
   \bm{P} \bm{C}_{T} 
   = \left[  
        \begin{array}{@{\,}c@{\,}}
	  \bm{C}_{S} \\ \bm{C}_{T \setminus S} 
	\end{array}
    \right].
\end{equation}
Hence, we have
\begin{equation}
    \bm{C}_{T}^{\top} \bm{C}_{T}^{}
    = \bm{C}_{T}^{\top} \bm{P}^{\top} \bm{P} \bm{C}_{T}^{}
    = \bm{C}_{S}            ^{\top} \bm{C}_{S}^{} 
    + \bm{C}_{T \setminus S}^{\top} \bm{C}_{T \setminus S}^{},
    \label{eq:CT^TxCT2}
\end{equation}
where the fact $\bm{P}^{\top} \bm{P} = \bm{I}$ has been used.
Direct computation together with (\ref{eq:CT^TxCT2}) gives
\begin{align}
    &  f (T \cup \{i\}) - f(T) \notag \\
    =& -\mathrm{tr} \left[\left(
                \bm{C}_T^{\top}\bm{C}_T + \bm{u}_i^{\top} \bm{u}_i
                + \epsilon \bm{I} 
                \right)^{-1} \right]
       +\mathrm{tr} \left[ \left( 
                \bm{C}_T^{\top} \bm{C}_T + \epsilon \bm{I} 
                \right)^{-1} \right] \notag \\
    =& \mathrm{tr} \left[ \left(
                \bm{A} + \bm{C}^{\top}_{T\setminus S}\bm{C}_{T\setminus S} 
                \right)^{-1} \right]
      -\mathrm{tr} \left[ \left(
                \bm{B} + \bm{C}^{\top}_{T\setminus S}\bm{C}_{T\setminus S} 
                \right)^{-1} \right] \notag \\
    =& \mathrm{tr} \left[ 
                \bm{A}^{-1} - \bm{A}^{-1} \bm{C}^{\top}_{T\setminus S}
                \left(
                    \bm{I} + \bm{C}_{T\setminus S} \bm{A}^{-1}\bm{C}^{\top}_{T\setminus S}
                \right)^{-1}
                \bm{C}_{T\setminus S} \bm{A}^{-1} \right] \notag \\
     & -\mathrm{tr} \left[ 
                \bm{B}^{-1} - \bm{B}^{-1} \bm{C}^{\top}_{T\setminus S}
                \left(
                    \bm{I} + \bm{C}_{T\setminus S} \bm{B}^{-1}\bm{C}^{\top}_{T\setminus S}
                \right)^{-1}
                \bm{C}_{T\setminus S} \bm{B}^{-1} \right].
    \label{eq:submodular_Ti-T}
\end{align}
Therefore,
\begin{align}
    &f (S \cup \{i\}) - f(S)- f (T \cup \{i\}) + f(T) \notag \\
    =& \mathrm{tr} \left[
                \bm{A}^{-2} \bm{C}^{\top}_{T\setminus S}
                \left(
                    \bm{I} + \bm{C}_{T\setminus S} \bm{A}^{-1}\bm{C}^{\top}_{T\setminus S}
                \right)^{-1}
                \bm{C}_{T\setminus S} \right] \notag \\
     & -\mathrm{tr} \left[ 
                \bm{B}^{-2} \bm{C}^{\top}_{T\setminus S}
                \left(
                    \bm{I} + \bm{C}_{T\setminus S} \bm{B}^{-1}\bm{C}^{\top}_{T\setminus S}
                \right)^{-1}
                \bm{C}_{T\setminus S} \right] \notag \\
    \ge& \mathrm{tr} \left[
                \bm{A}^{-2} \bm{C}^{\top}_{T\setminus S}
                \left(
                    \bm{I} + \bm{C}_{T\setminus S} \bm{A}^{-1}\bm{C}^{\top}_{T\setminus S}
                \right)^{-1}
                \bm{C}_{T\setminus S} \right] \notag \\
     & -\mathrm{tr} \left[ 
                \bm{B}^{-2} \bm{C}^{\top}_{T\setminus S}
                \left(
                    \bm{I} + \bm{C}_{T\setminus S} \bm{A}^{-1}\bm{C}^{\top}_{T\setminus S}
                \right)^{-1}
                \bm{C}_{T\setminus S} \right] \notag \\
    =& \mathrm{tr} \left[ 
                \left(\bm{A}^{-2} - \bm{B}^{-2} \right) 
                \bm{C}^{\top}_{T\setminus S} 
                \left(
                    \bm{I} + \bm{C}_{T\setminus S} \bm{A}^{-1}\bm{C}^{\top}_{T\setminus S}
                \right)^{-1}
                \bm{C}_{T\setminus S} \right],
    \label{eq:submodular_Si-S-Ti+T}
\end{align}
where the fact that the trace of a positive semidefinite matrix is nonnegative and the following semidefiniteness are adopted for the derivation of the inequality of the second to third equation. 
\begin{align}
    &\bm{C}^{\top}_{T\setminus S}
                \left(
                    \bm{I} + \bm{C}_{T\setminus S} \bm{A}^{-1}\bm{C}^{\top}_{T\setminus S}
                \right)^{-1}
                \bm{C}_{T\setminus S} \notag \\
    & - \bm{C}^{\top}_{T\setminus S}
                \left(
                    \bm{I} + \bm{C}_{T\setminus S} \bm{B}^{-1}\bm{C}^{\top}_{T\setminus S}
                \right)^{-1}
                \bm{C}_{T\setminus S} \succeq 0.
\end{align}
Since
\begin{align}
    \bm{A}^{-2} - \bm{B}^{-2} \succeq 0
\end{align}
and 
\begin{align}
    \bm{C}^{\top}_{T\setminus S} 
    \left(
        \bm{I} + \bm{C}_{T\setminus S} \bm{A}^{-1}\bm{C}^{\top}_{T\setminus S}
    \right)^{-1}
    \bm{C}_{T\setminus S} \succeq 0,
\end{align}
the trace of the product of a positive and nonnegative semidefinite matrix is nonnegative.
This completes the proof.
\end{proof}

\begin{proposition} 
    $f_{\mathrm{A}}$ defined by (\ref{eq:obj_tr_epsilon}) is monotone increasing.
    \label{thm:monotonicity}
\end{proposition}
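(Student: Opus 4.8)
The plan is to reduce monotonicity to the antitonicity of matrix inversion on the cone of positive definite matrices, exactly as the positivity of (\ref{eq:submodular_Si-S}) was argued in the proof of Proposition~\ref{thm:submodularity}. First I would recall from (\ref{eq:CT^TxCT2}) that whenever $S \subset T$ one has $\bm{C}_T^{\top}\bm{C}_T = \bm{C}_S^{\top}\bm{C}_S + \bm{C}_{T\setminus S}^{\top}\bm{C}_{T\setminus S}$, so that
$\bm{C}_T^{\top}\bm{C}_T + \epsilon\bm{I} \succeq \bm{C}_S^{\top}\bm{C}_S + \epsilon\bm{I} \succ 0$, the gap being the positive semidefinite matrix $\bm{C}_{T\setminus S}^{\top}\bm{C}_{T\setminus S}$, and all inverses are well defined because $\epsilon>0$ makes both matrices strictly positive definite regardless of the sizes of $S$ and $T$.

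Next I would invoke the standard fact that $\bm{0} \prec \bm{X} \preceq \bm{Y}$ implies $\bm{Y}^{-1} \preceq \bm{X}^{-1}$ (for instance by noting $\bm{I} \preceq \bm{X}^{-1/2}\bm{Y}\bm{X}^{-1/2}$, hence $\bm{X}^{1/2}\bm{Y}^{-1}\bm{X}^{1/2}\preceq \bm{I}$, and undoing the congruence). Applied with $\bm{X}=\bm{C}_S^{\top}\bm{C}_S+\epsilon\bm{I}$ and $\bm{Y}=\bm{C}_T^{\top}\bm{C}_T+\epsilon\bm{I}$ this gives $(\bm{C}_T^{\top}\bm{C}_T+\epsilon\bm{I})^{-1} \preceq (\bm{C}_S^{\top}\bm{C}_S+\epsilon\bm{I})^{-1}$. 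Taking traces — which is monotone for the Löwner order since the trace of a positive semidefinite matrix is nonnegative — yields $\mathrm{tr}[(\bm{C}_T^{\top}\bm{C}_T+\epsilon\bm{I})^{-1}] \le \mathrm{tr}[(\bm{C}_S^{\top}\bm{C}_S+\epsilon\bm{I})^{-1}]$. Multiplying by $-1$ and adding the common offset $r/\epsilon$ gives precisely $f_{\mathrm{A}}(S) \le f_{\mathrm{A}}(T)$, which is the definition of monotone increasing in Definition~2.

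An equivalent route, which reuses what is already on the page, is to observe that the single-element increment was essentially computed inside the proof of Proposition~\ref{thm:submodularity}: by Sherman--Morrison, $f_{\mathrm{A}}(S\cup\{i\}) - f_{\mathrm{A}}(S) = \mathrm{tr}(\bm{A}^{-1} - \bm{B}^{-1})$ with $\bm{B}=\bm{A}+\bm{u}_i^{\top}\bm{u}_i \succeq \bm{A} \succ 0$, and this is nonnegative by the same antitonicity argument (as already remarked right after (\ref{eq:submodular_Si-S})). Then any inclusion $S\subset T$ can be written as a chain $S=S_0\subset S_1\subset\cdots\subset S_m=T$ obtained by adjoining the elements of $T\setminus S$ one at a time, and $f_{\mathrm{A}}(T)-f_{\mathrm{A}}(S)$ is the sum of these nonnegative increments.

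I do not expect a genuine obstacle here; the only point deserving a word of care is the justification that inversion reverses the semidefinite order, which is classical, and the observation that the regularization $\epsilon\bm{I}$ is exactly what keeps every matrix in sight invertible so that the argument never breaks down for small sensor sets.
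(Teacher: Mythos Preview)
Your proposal is correct, and your second route---showing that the single-element increment $f_{\mathrm{A}}(S\cup\{i\})-f_{\mathrm{A}}(S)=\mathrm{tr}(\bm{A}^{-1}-\bm{B}^{-1})\ge 0$ from (\ref{eq:submodular_Si-S}) and then chaining through the elements of $T\setminus S$---is exactly the argument the paper gives. Your first route (applying antitonicity of inversion directly to $\bm{C}_S^{\top}\bm{C}_S+\epsilon\bm{I}\preceq\bm{C}_T^{\top}\bm{C}_T+\epsilon\bm{I}$ via (\ref{eq:CT^TxCT2})) is a slightly cleaner one-shot variant that avoids the induction, but it rests on the same ingredients.
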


\begin{proof}
For given $S, T \subset \{ 1, 2, \dots, n \}$ with $S \subset T$, it is clear that (\ref{eq:submodular_Si-S}) holds also for any $i \in T \setminus S $.
Hence, $f(S \cup \{ i \}) - f(S) \ge 0$.
Similarly, we can show that 
$f(S \cup \{ i \} \cup \{ j \}) - f(S) \ge 0$
for $j \in T \setminus  (S \cup \{ i\}) $.
Repeated application of this argument yields
$f(S) \le f(T)$, which is the desired conclusion.
This completes the proof.
\end{proof}

\subsection{Objective function based on E-optimality}
\label{sec:submodularityE}
We demonstrate the property of the objective function based on E-optimality, given by (\ref{eq:obj_eig}).
The objective function returns the minimum eigenvalue of the Fisher information matrix.

We first show by counterexample that the objective function fails to be submodular. 
If it is submodular, adding a sensor to a subset where a smaller number of sensors is selected gives a higher increment in the minimum eigenvalue than adding one to a subset where a larger number of sensors is selected.
Let us set $r$ = 3 and $n$ = 10, and consider a sensor candidate matrix $\bm{U}$ defined by 
\begin{align}
    \bm{U} = \left[
        \begin{array}{c}
        \bm{u}_1\\
        \bm{u}_2\\
        \bm{u}_3\\
        \bm{u}_4\\
        \bm{u}_5\\
        \bm{u}_6\\
        \end{array}
    \right]
    =
     \left[
        \begin{array}{rrr}
        0.2 & -0.1 & -0.2 \\
        -0.5 & -0.1 & 0.2 \\
        -0.2 & 0.3 & 0.2 \\
        -0.5 & 0.3 & -0.3 \\
        -0.4 & -0.3 & -0.4 \\
        0.3 & 0 & 0 \\
        \end{array}
    \right]. \notag
\end{align}
Suppose the measurement matrices where 3, 4, and 5 sensors are selected given by
\begin{align}
    &\bm{C}_{3} 
    = \left[ 
      \begin{array}{ccc}
        \bm{u}_1^{\top} &
        \bm{u}_2^{\top} &
        \bm{u}_3^{\top}
      \end{array}
   \right]^{\top}, \notag \\
    &\bm{C}_{4} 
    = \left[ 
      \begin{array}{cccc}
        \bm{u}_1^{\top} &
        \bm{u}_2^{\top} &
        \bm{u}_3^{\top} &
        \bm{u}_4^{\top}
      \end{array}
   \right]^{\top}, \notag \\
   &\bm{C}_{5} 
    = \left[ 
      \begin{array}{ccccc}
        \bm{u}_1^{\top} &
        \bm{u}_2^{\top} &
        \bm{u}_3^{\top} &
        \bm{u}_4^{\top} &
        \bm{u}_5^{\top}
      \end{array}
   \right]^{\top}. \notag
\end{align}
Then, comparing the increment in the minimum eigenvalue by adding a sensor, we obtain
\begin{align}
    &\lambda_{\mathrm{min}} \left( \bm{C}_{3p} \right)
    - \lambda_{\mathrm{min}} \left( \bm{C}_3 \right)
    >
    \lambda_{\mathrm{min}} \left( \bm{C}_{4p} \right)
    - \lambda_{\mathrm{min}} \left( \bm{C}_4 \right), 
    \label{eq:submodular_add5th}
\end{align}
where 
\begin{align}
    &\bm{C}_{3p} 
    = \left[ 
      \begin{array}{c}
        \bm{C}_3 \\
        \bm{u}_5
      \end{array}
    \right], 
    \bm{C}_{4p} 
    = \left[ 
      \begin{array}{c}
        \bm{C}_4 \\
        \bm{u}_5
      \end{array}
    \right], \notag
\end{align}
and
\begin{align}
    &\lambda_{\mathrm{min}} \left( \bm{C}_{4p'} \right)
    - \lambda_{\mathrm{min}} \left( \bm{C}_4 \right)
    <
    \lambda_{\mathrm{min}} \left( \bm{C}_{5p'} \right)
    - \lambda_{\mathrm{min}} \left( \bm{C}_6 \right),
    \label{eq:submodular_add6th}
\end{align}
where
\begin{align}
    &\bm{C}_{4p'} 
    = \left[ 
      \begin{array}{c}
        \bm{C}_4 \\
        \bm{u}_6
      \end{array}
    \right], 
    \bm{C}_{5p'} 
    = \left[ 
      \begin{array}{c}
        \bm{C}_5 \\
        \bm{u}_6
      \end{array}
    \right]. \notag
\end{align}
Since (\ref{eq:submodular_add5th}) and (\ref{eq:submodular_add6th}) correspond to submodularity and supermodularity, respectively, the objective function based on E-optimality is neither submodular, supermodular, nor modular. 

We now turn to the monotonicity of (\ref{eq:obj_eig}).
There is a well-known theorem which shows the lower bound in the minimum eigenvalue by rank-one modification of the Hermitian matrix\cite{cheng2012bounds}.  
According to the theorem,  the objective function based on E-optimality is monotone increasing in the case of $p>r$.
Note that the minimum value of the eigenvalues of $\bm{C}\bm{C}^{\top}$ is not monotone in the case of $p \le r$ because if an eigenvalue is newly added into the system it can be lower or higher than the minimum eigenvalue of the previous subset.

\section{Results and Discussion}
\label{sec:results}
In this section, the performance of sparse sensor selection methods based on different optimality criteria and schemes is evaluated. The results of numerical experiments on randomly generated systems and on a practical dataset of global ocean surface temperature are illustrated.

The performance of each method depends on the characteristics of the dataset and situation. Therefore, numerical experiments under two different conditions are conducted and guidance on selecting a proper objective function is put forward for these differing contexts. These problems are the minimum set, but they clearly show the characteristics of the greedy methods with three sensor selection criteria adopted in the present study, i.e., D-optimality, A-optimality, and E-optimality. The first problem is sensor selection for the estimation of the latent state variables given by the Gaussian distribution on randomly generated systems, which is the ideal case where the statistical assumption for the system is valid. The second problem is conducted using a practical dataset concerning global ocean surface temperature, and it focuses on sensor selection for the estimation of mode strengths as latent state variables under the situation where the test data are different from the training data. It corresponds to the nonideal case where the statistical assumption for the system is not valid as in the cross-validation study. 

Four methods are compared: the D-optimality-based greedy (DG) method, the A-optimality-based greedy (AG) method, the E-optimality-based greedy (EG) method, and the D-optimality-based convex relaxation (DC) method listed in Table \ref{table:Calc_method}. The comparison between the DG, AG, and EG methods provides insights into the characteristics of three sensor selection criteria when they are used in conjunction with the greedy method. 
The DG method which is efficient in terms of computational time was proposed by Saito et al.\cite{saito2019determinant}. In the DG method, the sensors are determined by maximizing the determinant of (\ref{eq:obj_det}) for each step in the case where $p>r$, and on the other hand, QR pivoting\cite{manohar2018data} is employed in the case where $p \le r$ since QR implementation for the greedy method is much faster than the straightforward implementation of the greedy method maximizing (\ref{eq:obj_det}). 
Furthermore, ``eig'' and ``min'' functions in MATLAB R2020a are employed and the minimum eigenvalue is obtained in the EG method.
The DC method corresponds to the D-optimality-based convex relaxation method proposed by Joshi and Boyd\cite{joshi2009sensor}. The MATLAB code is available on Github\cite{nakai2020github}. 
The numerical experiments are conducted under the computational environment listed in Table \ref{table:Comp_env}.

\begin{center}
\begin{threeparttable}[ht]
    \small
    \centering
    \caption{Sensor selection methods investigated in this study} 
    \label{table:Calc_method}
    \begin{tabular}[t]{lll}
        \hline\hline
        Name & Optimality & Scheme\\
        \hline\hline
        DG\cite{saito2019determinant} & D & ${p \le  r}$ : Greedy based on QR \\
        & & ${p>r}$ : Greedy \\
        \hline
        AG & A & Greedy (Eq. (\ref{eq:obj_tr_greedy}))\\
        \hline
        EG & E & Greedy (Eq. (\ref{eq:obj_eig_greedy}))\\
        \hline
        DC\cite{joshi2009sensor} & D & Convex relaxation\\
        \hline\hline
    \end{tabular}
\end{threeparttable}
\end{center}

\begin{center}
\begin{threeparttable}[ht]
    \small
    \centering
    \caption{Computational environment} 
    \label{table:Comp_env}
    \begin{tabular}[t]{ll}
        \hline\hline
        Processor information & Intel(R) Core(TM)\\
        & i7-10510U@ 1.80 GHz \\
        \hline
        Random access memory & 16 GB\\
        \hline
        System type & 64 bit operating system\\
        & x64 base processor\\
        \hline
        Operating system & Windows 10 Home\\
        & Version:2001\\
        \hline
        Source code & MATLAB R2020a\\
        \hline\hline
    \end{tabular}
\end{threeparttable}
\end{center}

\subsection{Performance on random systems}
\label{sec:results-random}
In this subsection, randomly generated data are considered and the performance of the sparse sensor selection methods listed in Table \ref{table:Calc_method} is evaluated. 
The random sensor-candidate matrices, $\bm{U}\in \mathbb{R}^{n{\times}r}$, are set where the component of the matrices is given by the Gaussian distribution of $\mathcal{N}(0,1)$ with $n$ = 2000 and $r$ = 10, and the $p$ sparse sensors are selected by each method. Then, the values of indices utilized for optimal criteria, i.e. the determinant, the trace of the inverse, and the minimum eigenvalue of the Fisher information matrix are evaluated. In addition, the latent state variables are reconstructed using the sparse sensors selected by each method in order to compare the estimation accuracy. Each component of the latent state variables vector, $\bm{z}\in \mathbb{R}^{r{\times}m}$, is given by the Gaussian distribution of $\mathcal{N}(0,1)$ with $r$ = 10 and $m$ = 1 in this error evaluation test. The computational time required to obtain the sparse sensors by each method is also compared. 

First, the optimal indices for sensors selected by each method are demonstrated. Figs.~\ref{fig:random_determinant}, \ref{fig:random_trace}, and \ref{fig:random_eigenvalue} show the relationship between the number of sensors and three indices: the determinant, the trace of the inverse, and the minimum eigenvalue of the Fisher information matrix, that is, $\bm{CC}^{\top}=\bm{HUU}^{\top}\bm{H}^{\top}$ ($p \le r$) and $\bm{C}^{\top}\bm{C}=\bm{U}^{\top}\bm{H}^{\top}\bm{HU}$ ($p > r$), respectively. The results of the DG, AG, EG, DC, and random selection methods are plotted together for comparison. Note that the results obtained from the random selection method are not plotted in Figs. 3 and 4 because the values of the trace of the inverse and minimum eigenvalue are larger and smaller than the range of the figures, respectively. The values plotted in Figs.~\ref{fig:random_determinant}, \ref{fig:random_trace}, and \ref{fig:random_eigenvalue} are averages over 2000 random samples and normalized by those of the DG method since this study focuses particulary on evaluating methods based on A- and E-optimality in comparison with that based on D-optimality.

Fig.~\ref{fig:random_determinant} reveals that the determinant obtained by the DG method is higher than that obtained by the AG and EG methods for any number of sensors. This indicates that the greedy method utilizing the objective function based on D-optimality is the most suitable for the maximization of the determinants, as expected. Fig.~\ref{fig:random_determinant} also demonstrates that the sensors selected by the AG method are superior to those selected by the EG method in terms of maximizing the determinant. 

Fig.~\ref{fig:random_trace} shows that the sensors selected by the AG method work beetter for minimizing the trace of the inverse in the case where $p \le$ 12 compared to the DG and EG methods. The AG method works as well as the DG method in the case where $p >$ 12. Note that the trace of the inverse obtained by the AG method, which is dedicated to minimizing the trace of the inverse, is slightly higher than that obtained by the DG method when $p$ exceeds 12. This is mainly because the greedy method does not obtain a combinatorial optimized solution but a suboptimal one in a step-by-step manner. On the other hand, the sensors selected by the EG method have significantly higher trace of the inverse in the case where $p >$ 10 compared to those according to the AG and DG methods. Therefore, the greedy methods based on A- and D-optimality are more suitable for minimizing the trace of the inverse. 

Fig.~\ref{fig:random_eigenvalue} shows that the sensors selected by the EG method, which seeks to maximize the minimum eigenvalue, work best for maximizing the minimum eigenvalue in the case where $p \le$ 11. However, the minimum eigenvalue obtained by the EG method is lower than that obtained by the DG and AG methods in the case where $p >$ 12. This might be due to the fact that the objective function based on E-optimality is not submodular; thus, its combination with the greedy method does not lead to good results, as stated in Section \ref{sec:submodularity}. 
On the other hand, the AG method works as well as the EG method in the case where $p \le$ 11, and the sensors selected by the AG method have the highest minimum eigenvalue in the case where $p >$ 11. This might be because the original objective function of the AG method is more related with maximization of the minimum eigenvalue of the Fisher information matrix than that of the DG method, as discussed in the next paragraph. Therefore, in contrast to the EG method, the AG method utilizing the submodular objective function performs well in terms of maximizing the minimum eigenvalue of the Fisher information matrix. Accordingly, the greedy method based on A-optimality and, depending on the conditions, the greedy method based on E-optimality are suitable for the selection of sensors which have higher minimum eigenvalues. Further, compared to the DG, AG, and EG methods, the DC method works well on the determinant, but not on the trace of the inverse or the minimum eigenvalue. Although, in broad terms, this can be attributed to the convex relaxation formulation, the detailed reason requires further investigation.

Here, the reason why the objective function of the AG method has stronger relationship with that of the EG method than that of the DG method is qualitatively explained. The objective function of the DG, AG and EG methods can be written for the oversampling situation as follows:
\begin{align}
f_{\textrm{D}}&=\lambda_1 \times \lambda_2 \times \cdots \times \lambda_r, \\
f_{\textrm{A}}&=\frac{1}{\lambda_1} + \frac{1}{\lambda_2} + \cdots + \frac{1}{\lambda_r}, \\
f_{\textrm{E}}&=\lambda_r,
\end{align}
where $\lambda_i$ is the $i$th largest eigenvalue of $\bm{C}^{\top}\bm{C}$ and $\lambda_r=\lambda_{\text{min}}$. A rational increase in any eigenvalue works evenly for the improvement in the objective function of the DG method, and therefore, the increase in minimum eigenvalue $\lambda_r$ is not necessarily required for the improvement if there is a sensor selection which increases other eigenvalues more in ratio. On the other hand, the increase in $\lambda_r$ is more important for the objective function of the AG method because 
$1 / \lambda_r = 1/ f_{\textrm{E}}$ is the largest term in $f_{\textrm{A}}$. Therefore, the increase in $\lambda_r$ is strongly demanded in the minimization of $f_{\textrm{A}}$ in the AG method. 

Next, the performance of the greedy methods based on different objective functions is discussed from the perspective of the reconstruction error. Fig.~\ref{fig:random_error} shows the relationship between the number of sensors and reconstruction error, where the reconstruction error of latent state variables is defined by the ratio of the difference between reconstructed value and true value to the true value. Note that results from the random selection method are not plotted in Fig. 5 because the reconstruction error in this case is too large and exceeds the range of the figure. The reconstruction error decreases as $p$ increases for the sensors selected by all the methods. The reconstruction error using the sensors selected by the DG and AG methods is lower than that by the EG method, and that by the DG method is slightly lower than that by the AG method except for a couple of conditions. This indicates that the determinant and the trace of the inverse of the Fisher information matrix, which correspond to the determinant of the error covariance matrix and mean square error, respectively, are important indices for accurate reconstruction in the case where the statistical assumption for the system is valid. 

Fig.~\ref{fig:random_time} illustrates the computational time required for sensor selection using the DG, AG, EG, DC, and random selection methods. 
The computational time for the AG and EG methods gradually increases with increasing $p$ because the greedy method chooses sensors in a step-by-step manner. The AG and EG methods need less time than the DC method. This indicates that all the greedy methods based on D-, A-, and E-optimality are superior to the convex relaxation method in terms of computational time. On the other hand, the AG and EG methods require more time than the DG method. This is mainly because the DG method utilizes QR decomposition in the case where $p \le r$. Although the AG method seems to be comparable to the DG method in terms of the indices of selected sensor quality, considering the computational cost, the latter is more effective for sensor selection in the ideal case where the statistical assumption for the system is valid.

\begin{figure}[htbp]
    \centering
    \includegraphics[width=3in]{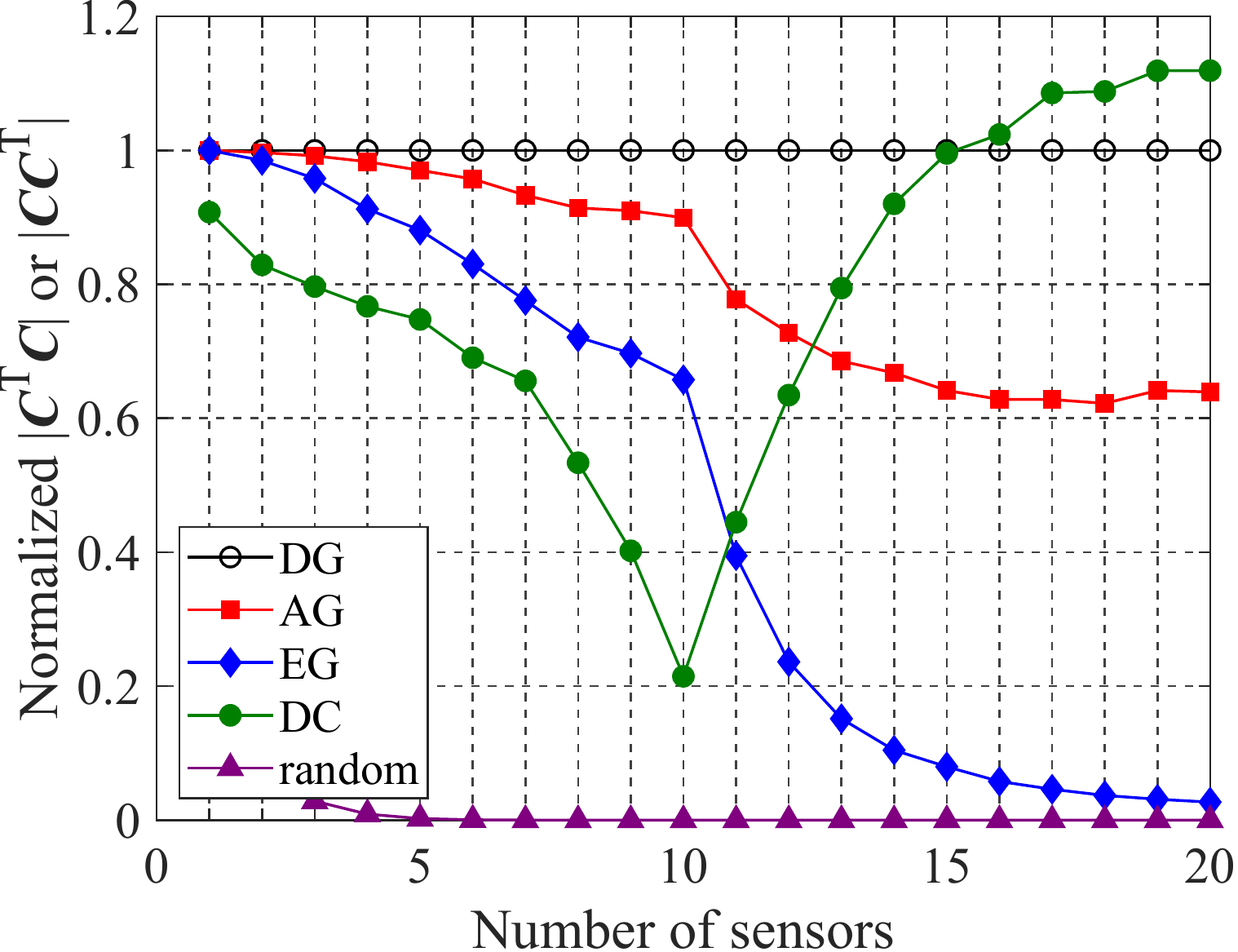}
    \caption{Normalized determinant of $\bm{CC}^{\top}$ ($p \le r$) or $\bm{C}^{\top}\bm{C}$ ($p > r$) against the number of sensors for random systems.}
    \label{fig:random_determinant}
\end{figure}

\begin{figure}[htbp]
    \centering
    \includegraphics[width=3in]{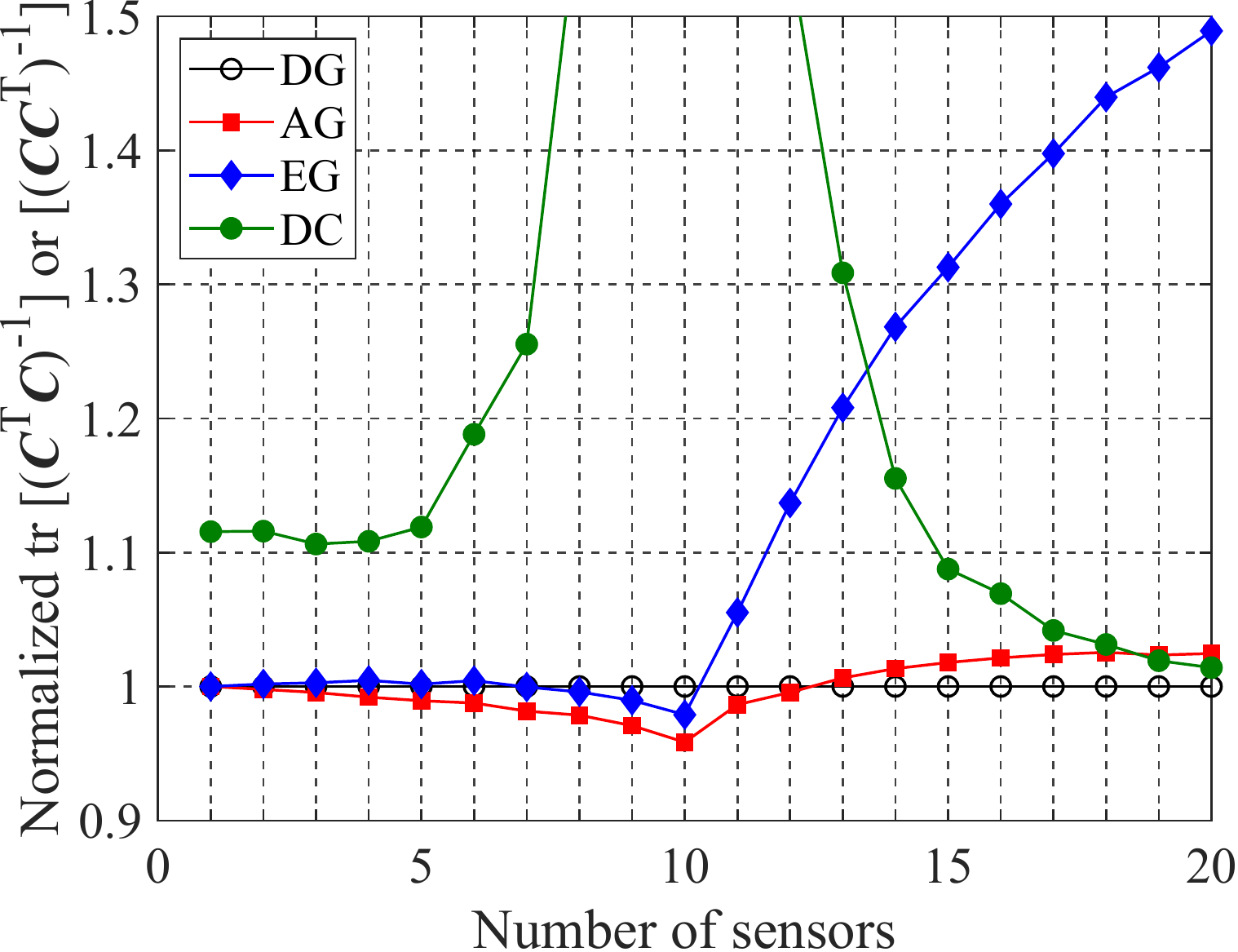}
    \caption{Normalized trace of $\left(\bm{CC}^{\top}\right)^{-1}$ ($p \le r$) or $\left(\bm{C}^{\top}\bm{C}\right)^{-1}$ ($p > r$) against the number of sensors for random systems. The results of the random selection are omitted because they are larger than the plotted range.}
    \label{fig:random_trace}
\end{figure}

\begin{figure}[htbp]
    \centering
    \includegraphics[width=3in]{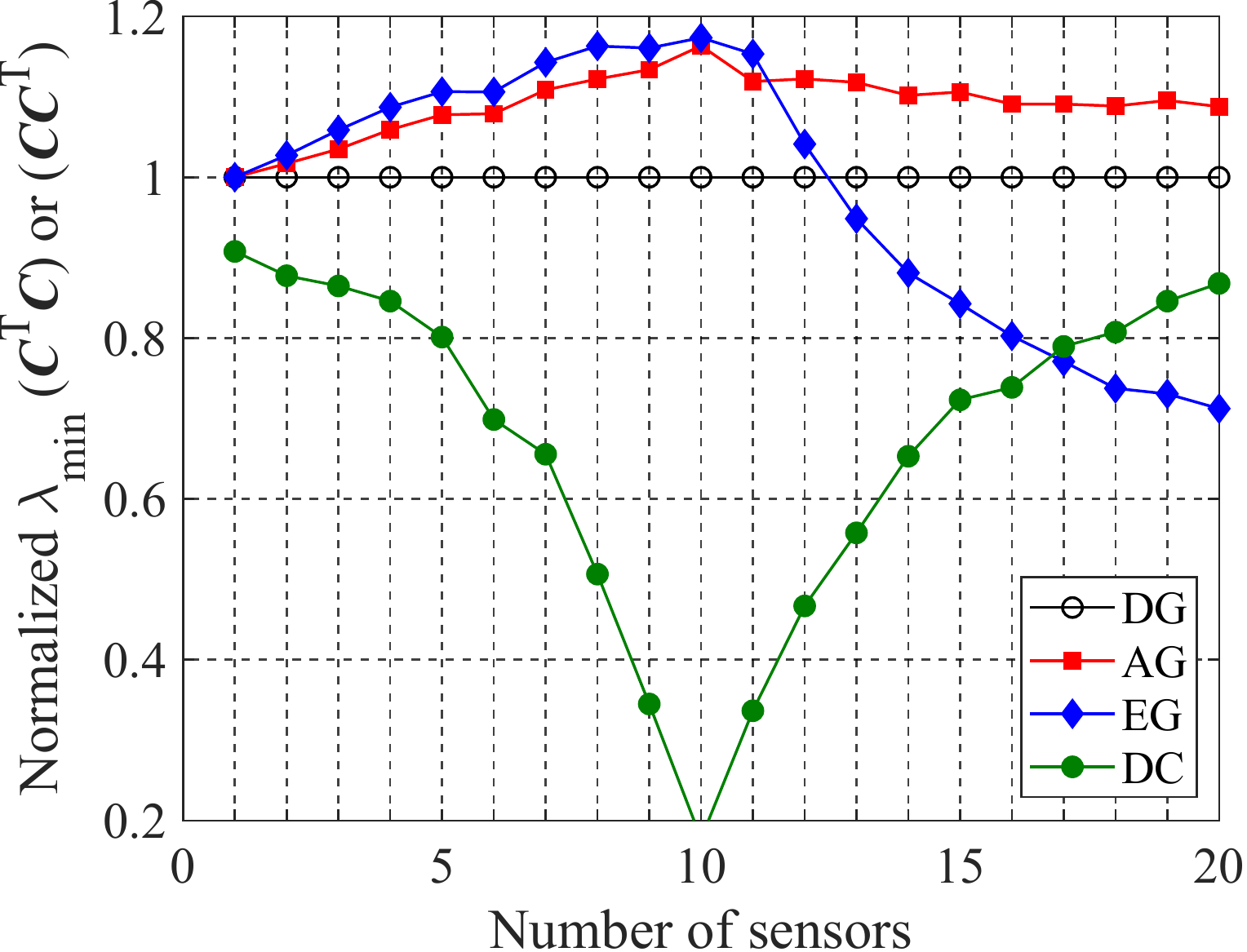}
    \caption{Normalized minimum eigenvalue of $\bm{CC}^{\top}$ ($p \le r$) or $\bm{C}^{\top}\bm{C}$ ($p > r$) against the number of sensors for random systems. 
    The results of the random selection are omitted because they are smaller than the plotted range.}
    \label{fig:random_eigenvalue}
\end{figure}

\begin{figure}[htbp]
    \centering
    \includegraphics[width=3in]{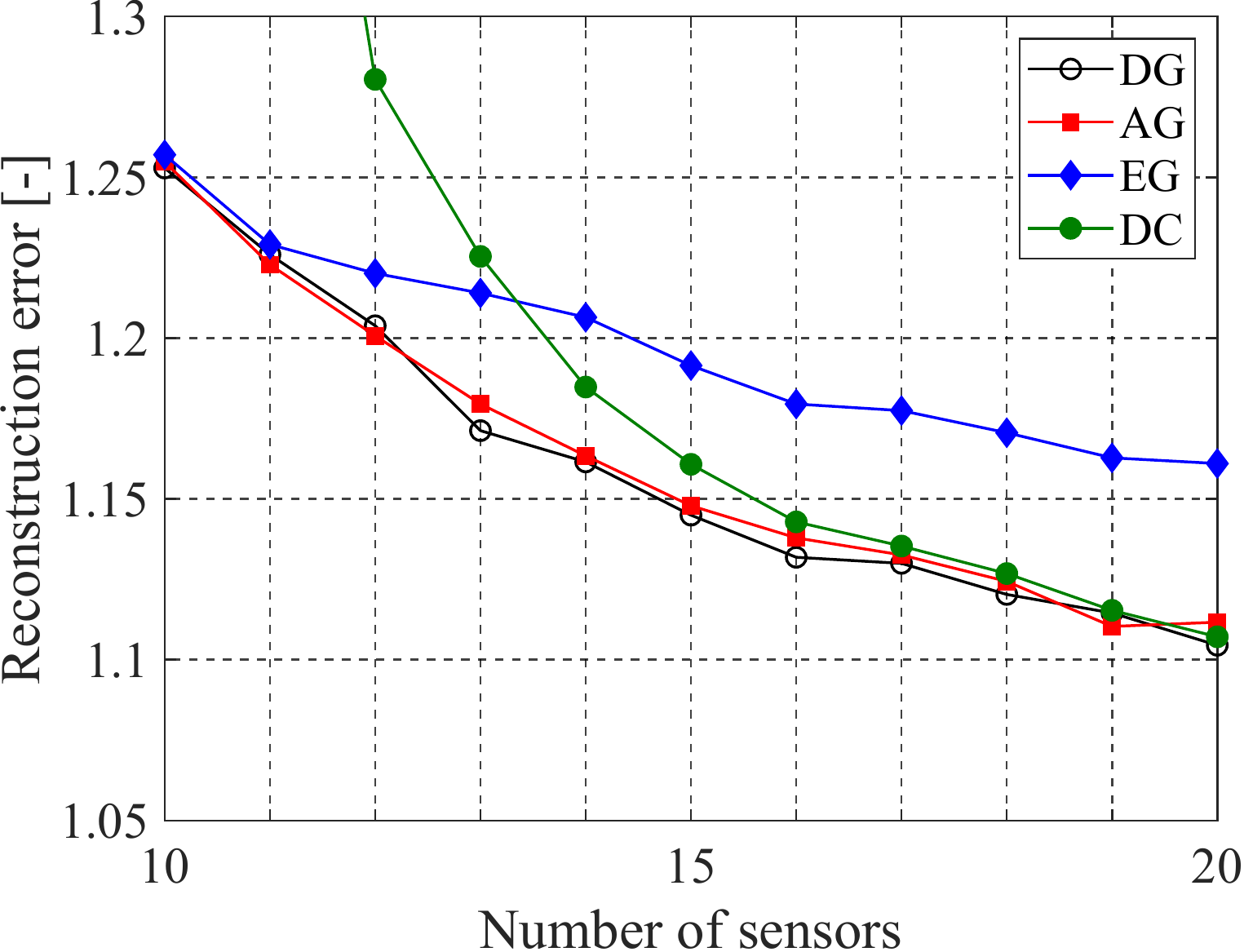}
    \caption {Reconstruction error against the number of sensors for the random system.The results of the random selection are omitted because they are larger than the plotted range.}
    \label{fig:random_error}
\end{figure}

\begin{figure}[htbp]
    \centering
    \includegraphics[width=3in]{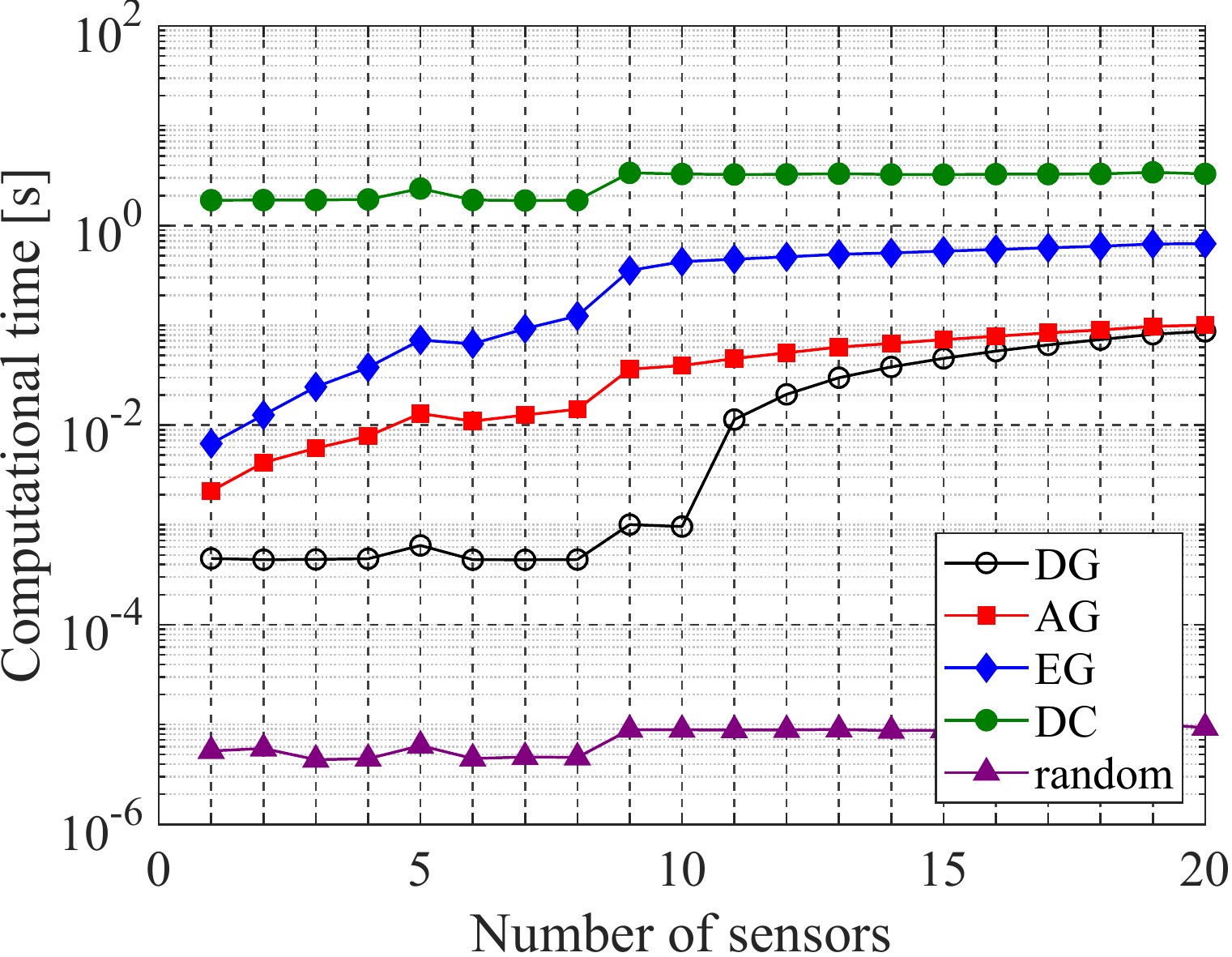}
    \caption{Computational time against the number of sensors for random systems.}
    \label{fig:random_time}
\end{figure}

\subsection{Performance on K-fold cross-validation in NOAA-SST problem}
In this subsection, $K$-fold cross-validation is conducted for sensor selection of the data-driven reconstruction problem of the NOAA OISST (NOAA-SST) V2 mean sea surface temperature dataset. Therefore, the performance of the sensors selected by the DG, AG, and EG methods is evaluated in the case where the statistical assumptions of equally distributed independent Gaussian noise is not exactly valid, which seems to be a more practical situation of the cross-validation. Note that the DC method is not applied to this problem because of its high computational cost for the cross-validation as shown in Fig.~\ref{fig:random_time}. Numerical tests using the same data as that for the training (that is not a cross-validation study) have also been conducted, and as a result, the indices and the reconstruction error of each method have intermediate characteristics between those for the random system problem discussed in Subsection \ref{sec:results-random} and those for the $K$-fold cross-validation discussed in this subsection. In addition, numerical tests of the $K$-fold cross-validation with $K$ = 3 and 7 have been conducted. Although the results change quantitatively by varying the $K$ parameter, the qualitative performance of each method does not depend on the value of this parameter. Furthermore, we have evaluated the performance on another practical dataset related to the flowfield around an airfoil, and obtained results similar to those for the cross-validation in the NOAA-SST problem. These results are omitted here for brevity.

The NOAA SST dataset consists of weekly global sea surface temperature measurements in the years between 1990 and 2000\cite{noaa}. Dimensional reduction is conducted for this problem in exactly the same way as in a previous study\cite{saito2019determinant}. The data matrix $\bm{X}\in\mathbb{R}^{n\times m}$, which consists of $m$ snapshots of the spatial temperature distribution vector of dimension $n$, is decomposed into the left singular matrix $\bm{U} \in \mathbb{R}^{n\times m}$, the diagonal matrix of singular values $\bm{S} \in \mathbb{R}^{m\times m}$, and the right singular matrix $\bm{V} \in \mathbb{R}^{m\times m}$, by economy SVD, corresponding to POD.
Here, $\bm{U}$, $\bm{S}$, and $\bm{V}$ show the spatial POD modes, the POD mode amplitudes, and the temporal POD modes, respectively. 
Hence, the reduced-order model $\bm{X}\approx\bm{U}_{1:r}\bm{S}_{1:r}\bm{V}_{1:r}^{\top}$ is given by applying the truncated SVD for a given rank $r$. 
Then, the measurement matrix is described by $\bm{C}=\bm{H}\bm{U}_{1:r}$, and the latent state variables become the POD mode amplitude: $\bm{Z}=\bm{S}_{1:r}\bm{V}_{1:r}^{\top}$.
Refer to \cite{saito2019determinant} for detailed information. 
The NOAA-SST dataset has 520 snapshots on a 360 $\times$ 180 spatial grid, and the data truncated to the $r=10$ POD modes are used in this study. The 520 snapshots are partitioned into 5 segments ($K$=5). One segment is used as the test data with the remaining segments used for the training data. 

Fig.~\ref{fig:NOAA_map} shows the sensor locations obtained by using the DG, AG, and EG methods, respectively, in the case where $p=15$ of one test. Although the locations selected are slightly different depending on the method, they are similar, especially with respect to the DG and AG methods.

The results of optimal indices are shown in Figs.~\ref{fig:NOAA_determinant}, \ref{fig:NOAA_trace}, and \ref{fig:NOAA_eigenvalue}: the determinant, the trace of the inverse, and the minimum eigenvalue of the Fisher information matrix, respectively. Figs.~\ref{fig:NOAA_determinant}, \ref{fig:NOAA_trace}, and \ref{fig:NOAA_eigenvalue} illustrate that the characteristics of indices are similar to those for the random system problem discussed in Subsection \ref{sec:results-random}, though plots are noisy because results pertain to only five tests of sensor selection. Fig.~\ref{fig:NOAA_determinant} shows that the determinants obtained by the DG method are the highest, and those by the EG method are the lowest with the AG method being intermediate for any number of sensors, as with the random system problem. The sensors selected by the DG and AG methods also work better for minimizing the trace of the inverse compared to those by the EG method in the case of large $p$, as shown in Fig.~\ref{fig:NOAA_trace}. Note that, unlike the results of the random system problem, the sensors selected by the AG method are marginally superior to those by the DG method for any number of sensors. Fig.~\ref{fig:NOAA_eigenvalue} shows that the sensors selected by the EG method work best for maximizing the minimum eigenvalue in the case where $p \le$ 12, and the sensors selected by the AG method tend to have a high minimum eigenvalue regardless of the number of sensors, similar to the random system problem. Note that the sensors selected by the EG method work better for larger $p$ than in the random system problem. 

Fig.~\ref{fig:NOAA_error_CV} shows the relationship between the reconstruction error and the number of sensors. In contrast to the random system problem shown in Subsection \ref{sec:results-random}, the error associated with the DG method is higher than that of the AG and EG methods, and the AG method has a lower error than the EG method. This might be because the worst-case variance of the estimation error governs the performance in the case where the statistical assumption for the system is not valid. This corresponds to E-optimality, which is the objective function of the EG method. Accordingly, the sensors selected by the EG and AG methods, which have a high minimum eigenvalue of the Fisher information matrix as demonstrated in Fig.~\ref{fig:NOAA_eigenvalue}, perform better in terms of the estimation error than those by the DG method. In addition, the sensors selected by the AG method, which tend to have a high minimum eigenvalue for any number of sensors as shown in Fig.~\ref{fig:NOAA_eigenvalue}, work better than the EG method. 
Therefore, the greedy method with the objective function based on A-optimality is the most suitable for reconstruction of practical problems, in which the statistical assumption for the noise involved is not exactly valid, because of the higher minimum eigenvalue of the Fisher information matrix of the sensors selected by the AG method rather than that by the EG method. 

Fig.~\ref{fig:NOAA_time} shows the computational time required to obtain the sensors. The qualitative characteristics in Fig.~\ref{fig:NOAA_time} are the same as those in Fig.~\ref{fig:random_time}. 
Therefore, considering the reconstruction error and computational cost comprehensively, the greedy method based on A-optimality is the most suitable for selecting the appropriate sensors for the reconstruction of the practical dataset.

\begin{figure}[htbp]
    \centering
    \includegraphics[width=3in]{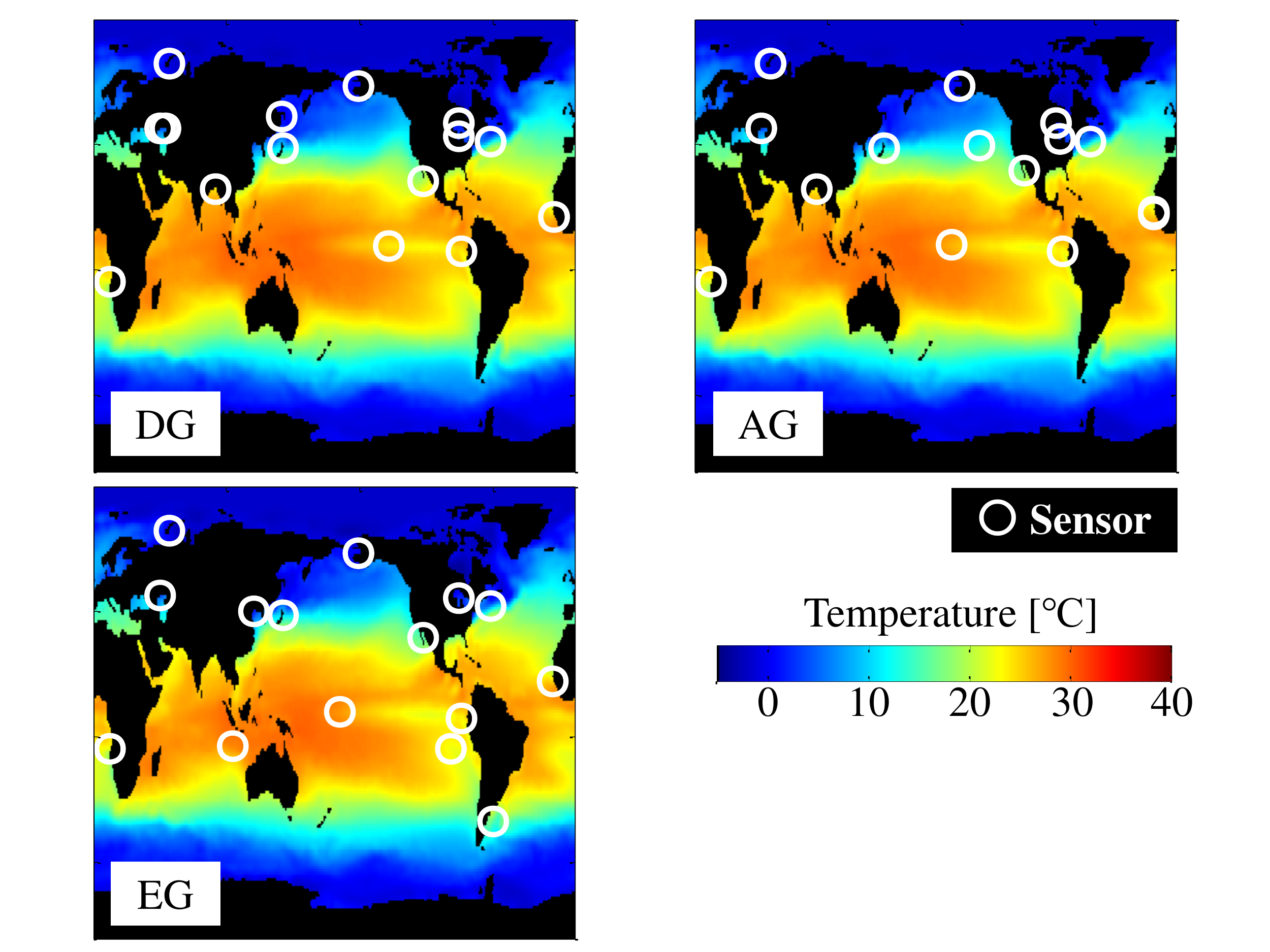}
    \caption {Sensor locations in the case where $p=15$ for the NOAA-SST problem.}
    \label{fig:NOAA_map}
\end{figure}

\begin{figure}[htbp]
    \centering
    \includegraphics[width=3in]{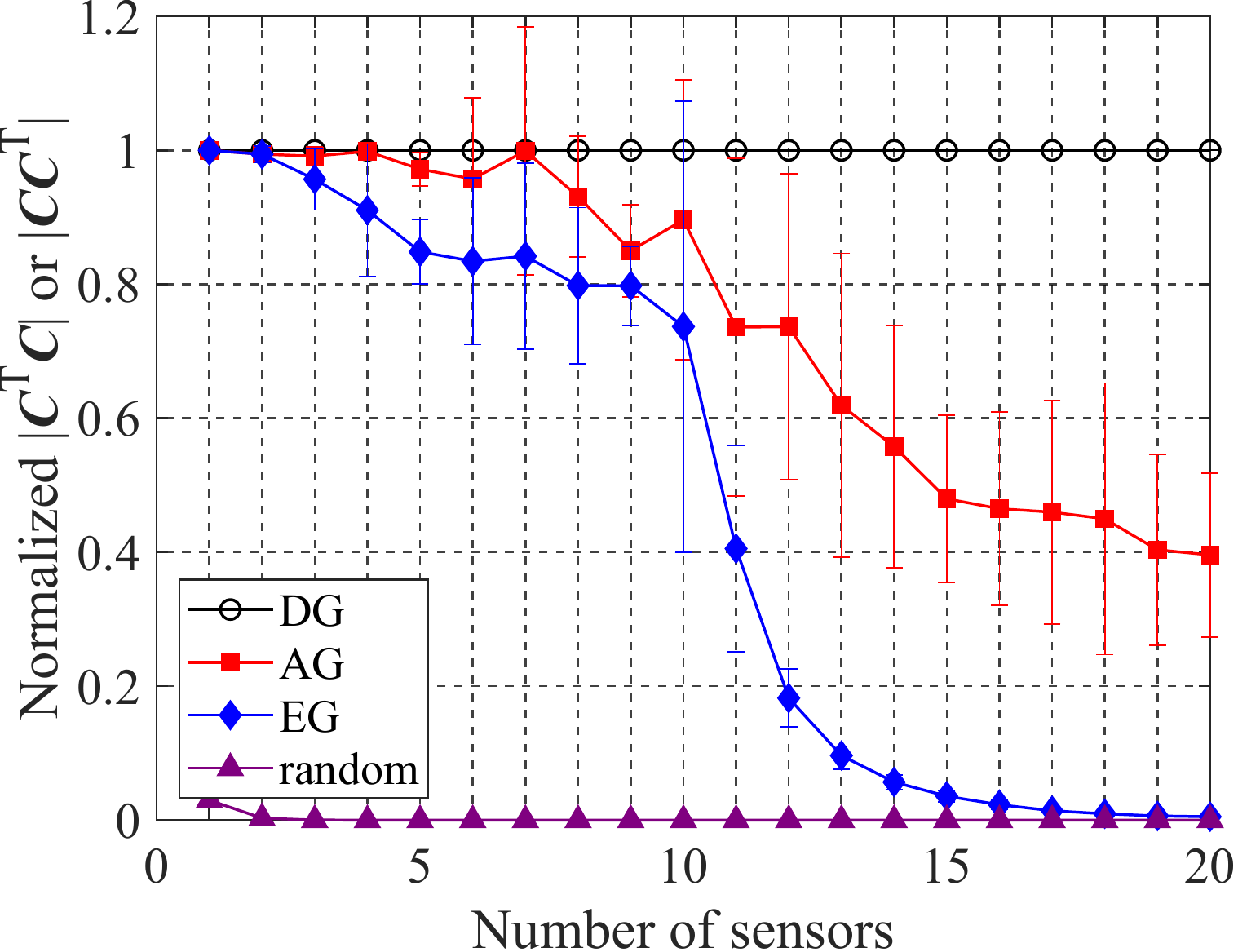}
    \caption {Normalized determinant of $\bm{CC}^{\top}$ ($p \le r$) or $\bm{C}^{\top}\bm{C}$ ($p > r$) against the number of sensors for the NOAA-SST problem. The results of the random selection are omitted because they are larger than the plotted range.}
    \label{fig:NOAA_determinant}
\end{figure}

\begin{figure}[htbp]
    \centering
    \includegraphics[width=3in]{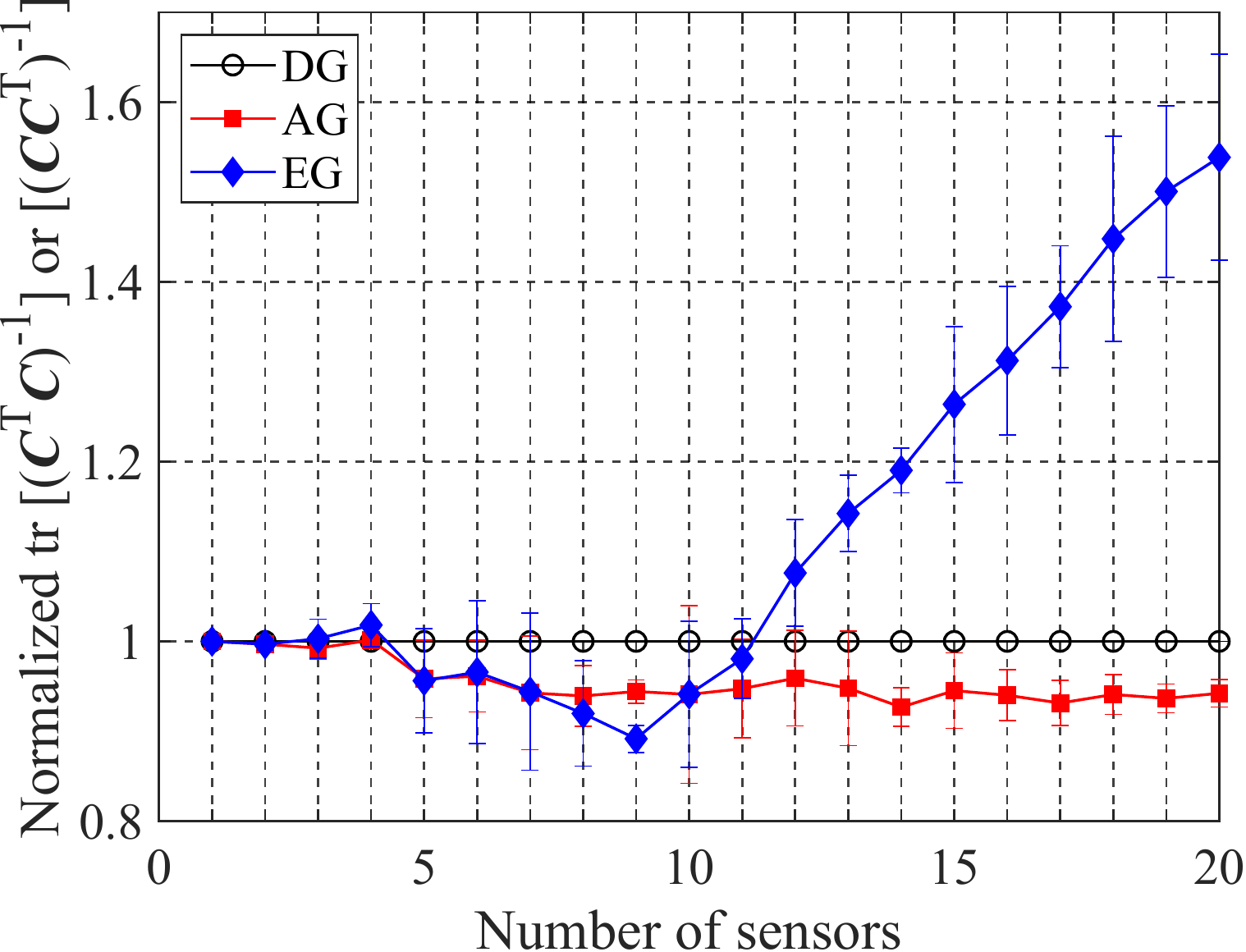}
    \caption {Normalized trace of $\left(\bm{CC}^{\top}\right)^{-1}$ ($p \le r$) or $\left(\bm{C}^{\top}\bm{C}\right)^{-1}$ ($p > r$) against the number of sensors for the NOAA-SST problem. The results of the random selection are omitted because they are smaller than the plotted range.}
    \label{fig:NOAA_trace}
\end{figure}

\begin{figure}[htbp]
    \centering
    \includegraphics[width=3in]{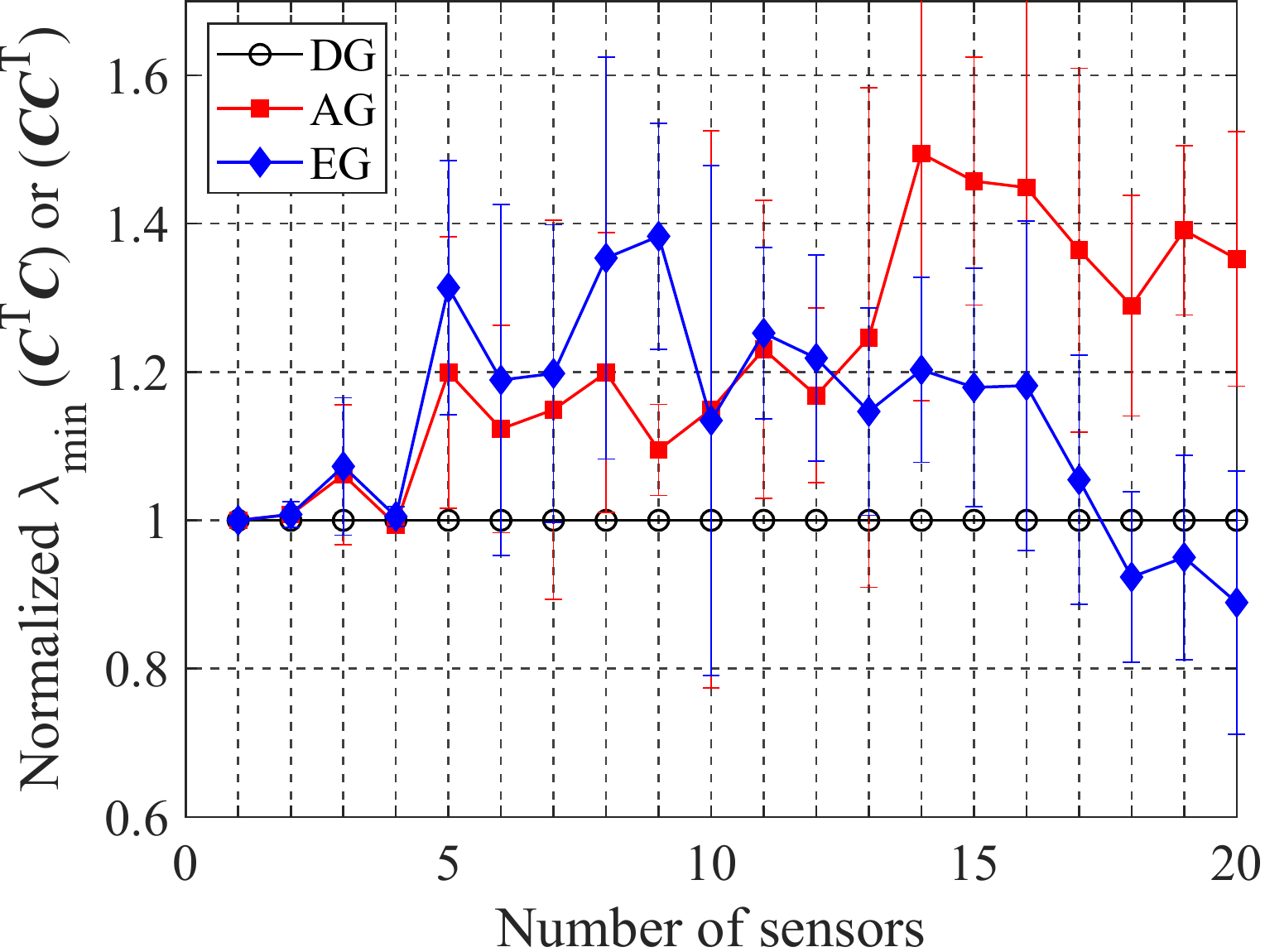}
    \caption {Normalized minimum eigenvalue of $\bm{CC}^{\top}$ ($p \le r$) or $\bm{C}^{\top}\bm{C}$ ($p > r$) against the number of sensors for for the NOAA-SST problem. The results of the random selection are omitted because they are smaller than the plotted range.}
    \label{fig:NOAA_eigenvalue}
\end{figure}

\begin{figure}[htbp]
    \centering
    \includegraphics[width=3in]{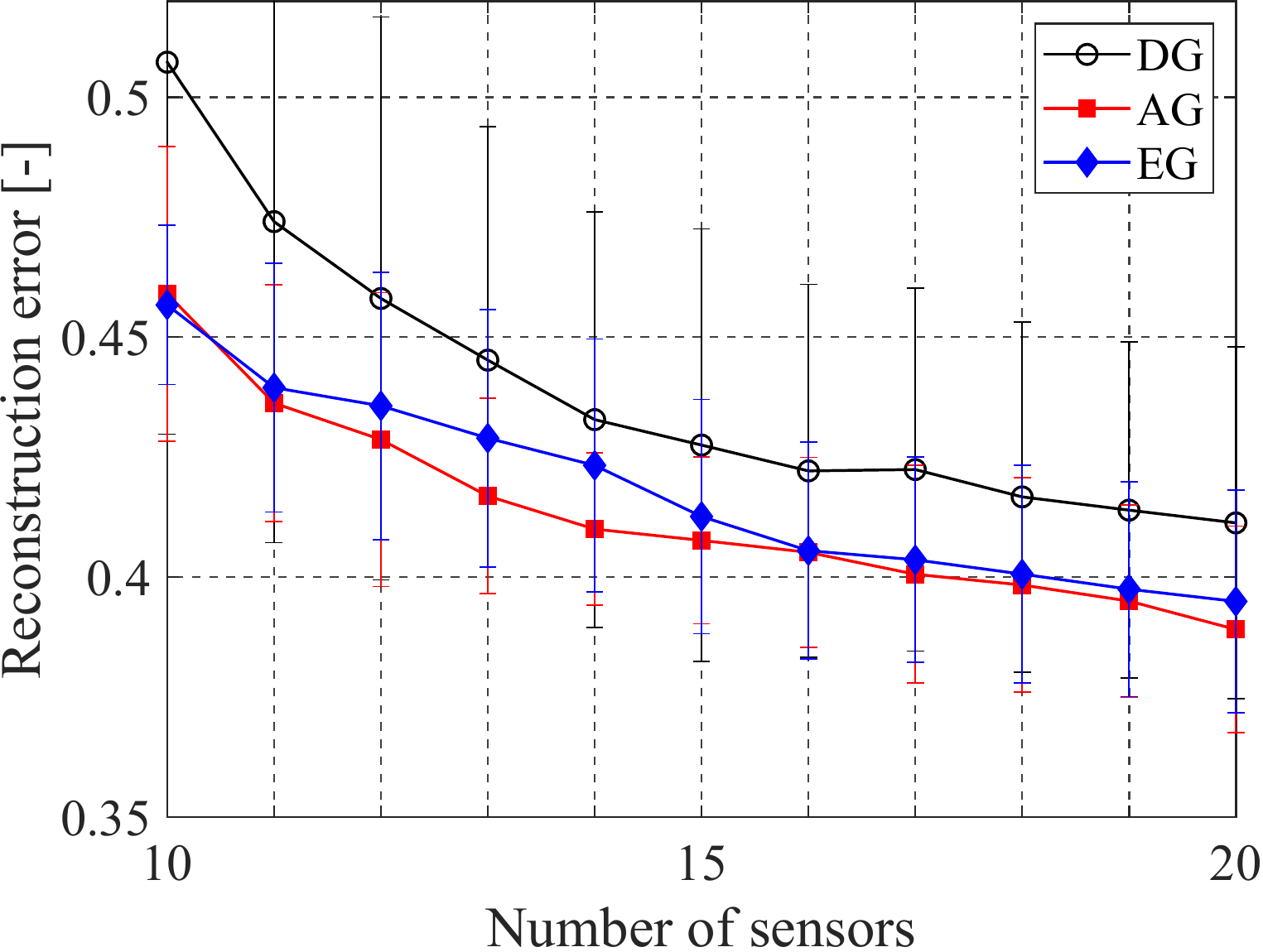}
    \caption {Reconstruction error against the number of sensors for the NOAA-SST problem. The results of the random selection are omitted because they are larger than the plotted range.}
    \label{fig:NOAA_error_CV}
\end{figure}

\begin{figure}[htbp]
    \centering
    \includegraphics[width=3in]{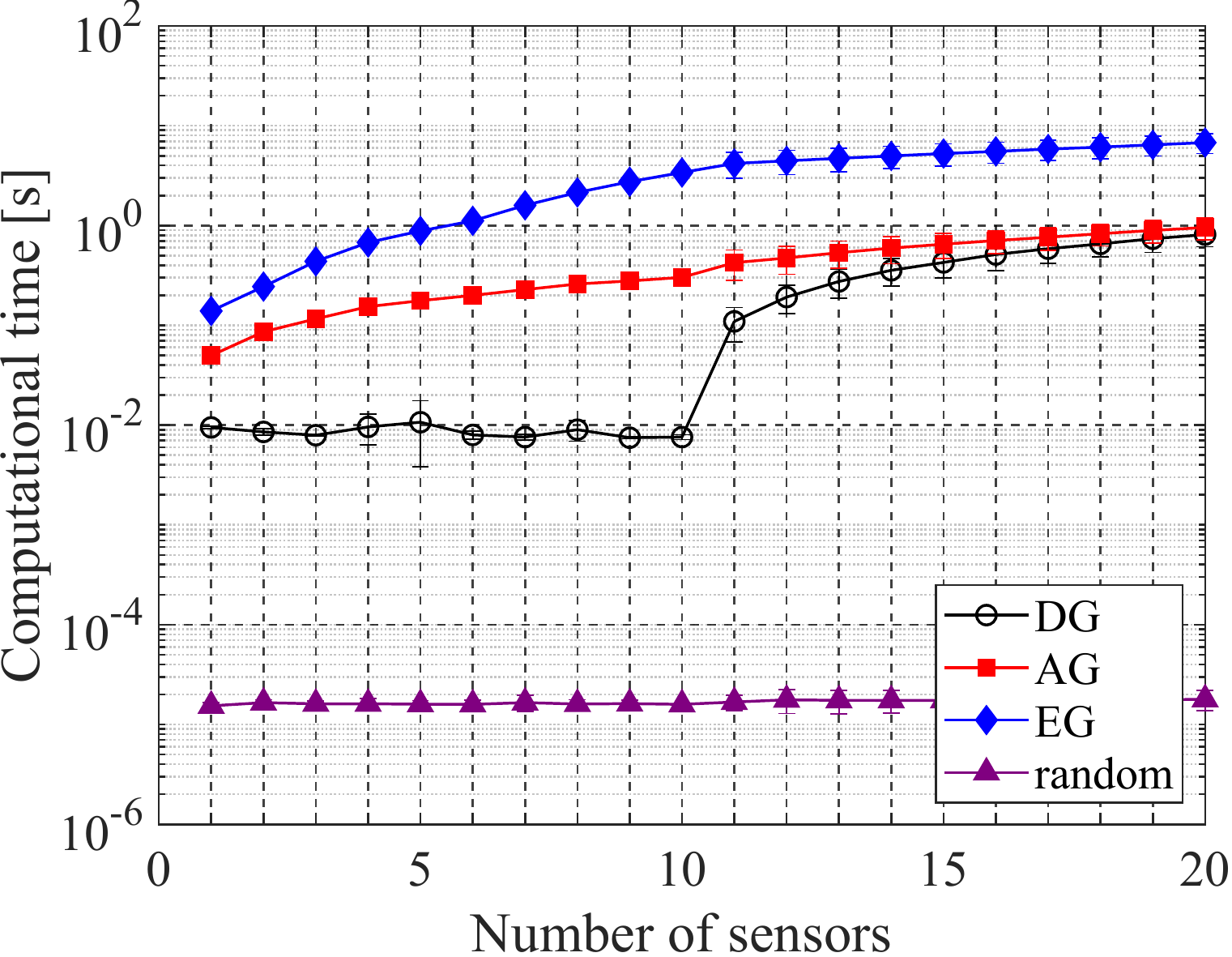}
    \caption {Computational time against the number of sensors for NOAA-SST problem}
    \label{fig:NOAA_time}
\end{figure}

\section{Conclusions}
The problem of choosing an optimal set of sensors estimating a snapshot of high-dimensional data is considered. Greedy methods for D-, A-, and E-optimality of optimal designs are considered, corresponding to designs that maximize the determinant, minimize the trace of the inverse, and maximize the minimum eigenvalue of the Fisher information matrix, respectively. 

Objective functions based on A-and E-optimality are derived, and their submodularity and monotonicity are investigated. Then, the D-, A-, and E-optimality-based greedy (DG, AG, and EG) methods are applied to two problems with different conditions, and guidance on selecting a suitable objective function for a given dataset and situation is provided. A randomly generated system with known statistics (random system problem) is used and a practical dataset of global ocean surface temperature is reconstructed with a $K$-fold cross-validation test (NOAA-SST problem). The random system problem is almost ideal in terms of statistics, while the NOAA-SST problem is nonideal since the test data statistics of noise are not exactly the same as what is assumed for the system owing to the cross-validation study. 

Trends in optimal indices concerning the sensors selected by each method do not change for the two problems. The DG and EG methods have the trends expected from the objective functions. On the other hand, it is of significance that the AG method works best not only on minimization of the trace of the inverse, but also on maximization of the minimum eigenvalue. This is because the increase in the minimum eigenvalue is strongly demanded in the objective function of the AG method while the EG method does not work significantly for maximization of the minimum eigenvalue due to the absence of submodularity.

Meanwhile, reconstruction error trends differ between the two problems. In the ideal case where the statistics of the system are well known, the determinant and the trace of the inverse are the important indices, and accordingly, the DG method is the most suitable for accurate reconstruction and low computational cost. On the other hand, in the nonideal case where the statistical assumption for the system is not valid as in the cross-validation study, the worst-case variance of the estimation error governs the performance, and accordingly, the AG method is the most suitable.
Based on the superiority of A-optimality as shown in this study, we constructed a new proximal-operator-based sensor selection algorithm based on A-optimality\cite{nagata2020data}.

In addition, these results also seem to be applicable to the sampling point of the reduced-order model (ROM) using the DEIM or Q-DEIM framework (hereafter, ROM-DEIM). This is because the procedure employed in Q-DEIM, i.e., QR decomposition, corresponds to optimization using the D-optimality criterion for the number of sensors less than or equal to that of the latent variable, as stated previously. Although oversampling in the DEIM framework is an issue, we suggest that this is straightforward to navigate using an optimal experimental design, as done in the present study. This suggests that other optimal designs, e.g., A-optimal or E-optimal designs, can also be utilized for selecting sampling points for the ROM-DEIM framework. These applications are interesting, though we will need to address whether the assumption of the ROM-DEIM framework\cite{drmac2016new,peherstorfer2020stability} is relatively ideal or not ideal in terms of the sensor selection problem and to clarify which is the better choice between D- and A-optimality. The comparison is partially demonstrated in the Appendix of reference\cite{saito2019determinant}, but is beyond the scope of this study. The applicability of the sensor selection methods in the present study to the sampling and oversampling strategy in the ROM-DEIM framework will be considered in future research.



\bibliographystyle{IEEEtran}
\bibliography{main}

\begin{IEEEbiography}
[{\includegraphics[width=1in,height=1.25in,clip,keepaspectratio]{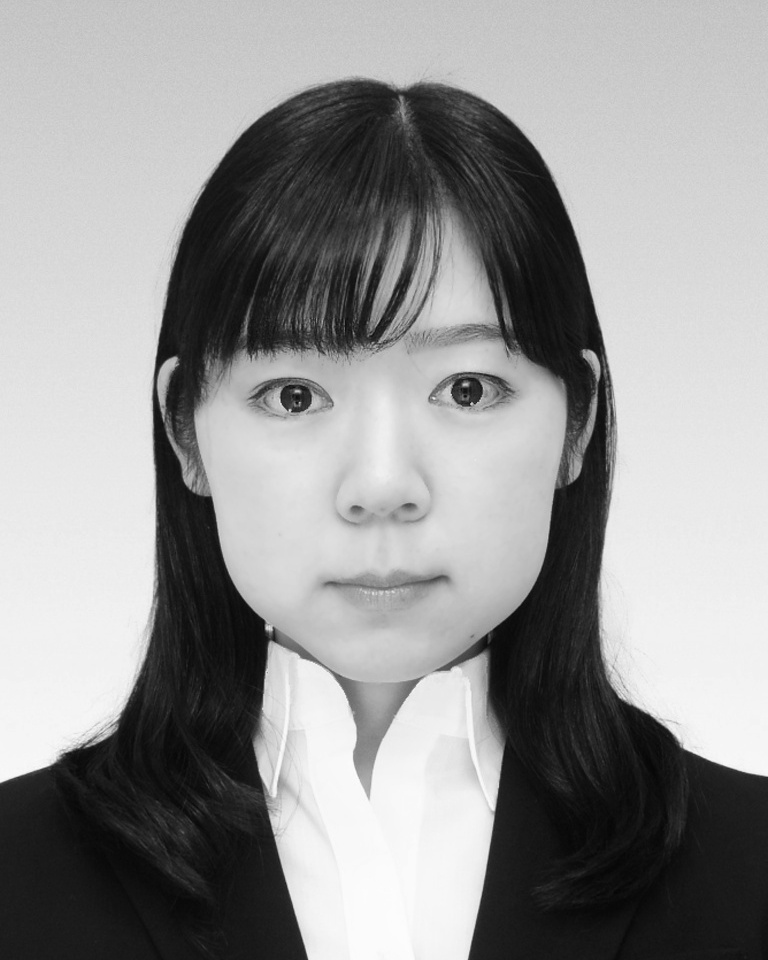}}]{Kumi Nakai} received the Ph.D. degree in mechanical systems engineering from Tokyo University of Agriculture and Technology, Japan, in 2020. From 2017-2020, she was a Research Fellow of Japan Society for the Promotion of Science (JSPS) at Tokyo University of Agriculture and Technology, Japan. Since 2020, she has been a postdoctoral researcher at Tohoku University, Japan.
\end{IEEEbiography}

\begin{IEEEbiography}
[{\includegraphics[width=1in,height=1.25in,clip,keepaspectratio]{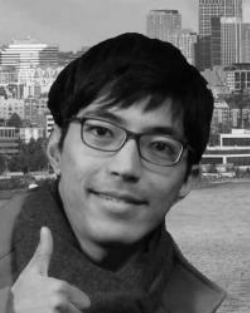}}]{Keigo Yamada} received the B.S. degree in physics from Tohoku University, Sendai, Japan, in 2019. He is a M.S. student of engineering at Tohoku University.
\end{IEEEbiography}

\begin{IEEEbiography}
[{\includegraphics[width=1in,height=1.25in,clip,keepaspectratio]{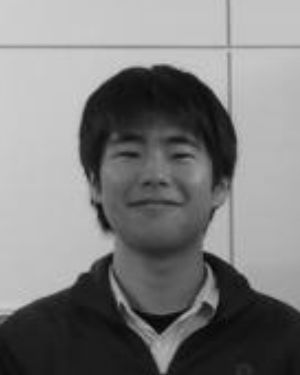}}]{Takayuki Nagata} received the Ph.D. degree in aerospace engineering from Tohoku University, Japan, in 2020. Since 2020, he has been a postdoctoral researcher at Tohoku University, Japan. 
\end{IEEEbiography}

\begin{IEEEbiography}
[{\includegraphics[width=1in,height=1.25in,clip,keepaspectratio]{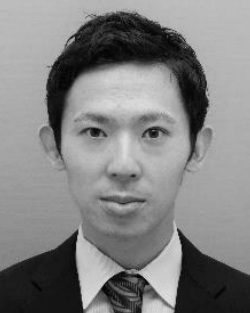}}]{Yuji Saito} received the B.S. degree in mechanical engineering, and the Ph.D. degree in mechanical space engineering from Hokkaido University, Sapporo, Japan in 2018. He is an Assistant Professor of Aerospace Engineering at Tohoku University, Sendai, Japan.
\end{IEEEbiography}

\begin{IEEEbiography}
[{\includegraphics[width=1in,height=1.25in,clip,keepaspectratio]{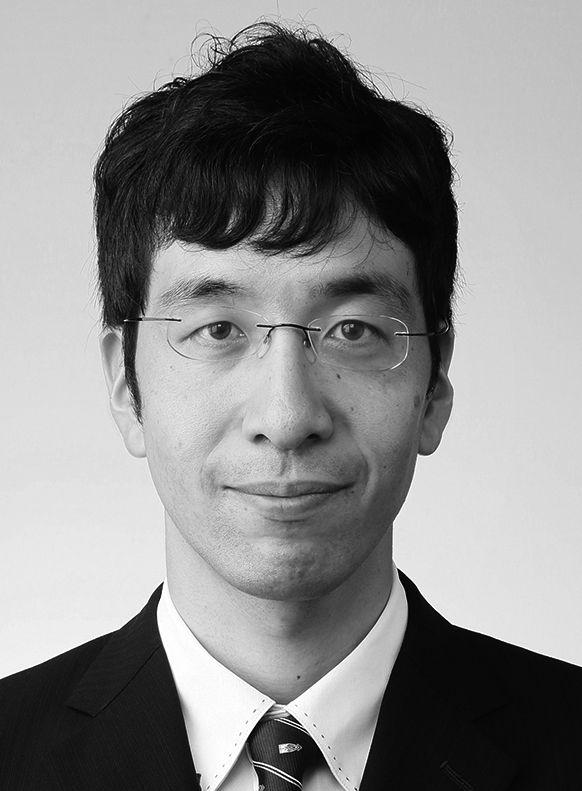}}]{Taku Nonomura} received the B.S. degree in mechanical and aerospace engineering from Nagoya University, Nagoya, Japan in 2003, and the Ph. D. degree in aerospace engineering from the University of Tokyo, Tokyo, Japan in 2008. He is currently an Associate Professor of Aerospace Engineering at Tohoku University, Sendai, Japan. 
\end{IEEEbiography}


\end{document}